\newif{\ifdraft}\drafttrue
\newif{\ifdrawFigures}\drawFigurestrue
\newif{\iffull}\fulltrue
\newcommand{\thickhline}{%
    \noalign {\ifnum 0=`}\fi \hrule height 1pt
    \futurelet \reserved@a \@xhline
}
\newcolumntype{"}{@{\hskip\tabcolsep\vrule width 1pt\hskip\tabcolsep}}
\newtheorem{conjecture}[theorem]{Conjecture}
\renewcommand{\setminus}{\mysetminus}
\newenvironment{aw}{\noindent\color{magenta} AW : }{}
\newcommand{\mysetminusD}{\raisebox{.8pt}{\hbox{\tikz{\draw[line width=0.6pt,line cap=round] (3.5pt,0pt) -- (0,5.2pt);}}}}
\newcommand{\mysetminusT}{\mysetminusD}
\newcommand{\mysetminusS}{\raisebox{.5pt}{\hbox{\tikz{\draw[line width=0.45pt,line cap=round] (2.2pt,0) -- (0,3.8pt);}}}}
\newcommand{\mysetminusSS}{\raisebox{.35pt}{\hbox{\tikz{\draw[line width=0.4pt,line cap=round] (1.5pt,0) -- (0,2.8pt);}}}}
\newcommand{\mysetminus}{\mathbin{\mathchoice{\mysetminusD}{\mysetminusT}{\mysetminusS}{\mysetminusSS}}}
\newcommand{\Nleq}{\mathrel{\unlhd}}
\newcommand{\Nle}{\mathrel{\lhd}}
\newcommand{\set}[2]{\left\{\, \mathinner{#1}\vphantom{#2}\: \left|\: \vphantom{#1}\mathinner{#2} \right.\,\right\}}
\newcommand{\oneset}[1]{\left\{\, \mathinner{#1} \,\right\}}
\newcommand{\interval}[2]{[ \mathinner{#1}..\mathinner{#2}] }
\newcommand{\abs}[1]{\left|\mathinner{#1}\right|}
\newcommand{\ceil}[1]{\left\lceil\mathinner{#1} \right\rceil}
\newcommand{\gen}[1]{\left< \mathinner{#1} \right>}
\newcommand{\genr}[2]{\left< \, \mathinner{#1}\; \middle|\;\mathinner{#2} \, \right>}
\newcommand{\Fit}{\operatorname{Fit}}
\newcommand{\FitL}{\operatorname{FitLen}}
\newcommand{\Set}{\mathrm{set}}
\newcommand{\mcomm}[3]{\left[{#1},\kern.1em_{#2}\,\kern.1em {#3} \right]}
\newcommand{\smcomm}[2]{\left[_{#1}\,\kern.1em {#2} \right]}
\newcommand{\fitcomm}[2]{\mcomm{#1}{M}{#2}}
\newcommand{\N}{\ensuremath{\mathbb{N}}}
\newcommand{\NP}{\ensuremath{\mathsf{NP}}\xspace} %
\newcommand{\coNP}{\ensuremath{\mathsf{coNP}}\xspace}
\newcommand{\Ac}[1]{\ensuremath{\mathsf{AC}^{#1}}\xspace}
\newcommand{\AC}{\ensuremath{\mathsf{AC}^0}\xspace}
\newcommand{\CC}{\ensuremath{\mathsf{CC}^0}\xspace}
\renewcommand{\P}{\ensuremath{\mathsf{P}}\xspace}
\renewcommand{\phi}{\varphi}
\newcommand{\eps}{\varepsilon}
\newcommand{\AND}{\ensuremath{\mathrm{AND}}\xspace}
\newcommand{\MOD}[1]{\ensuremath{\mathrm{MOD}_{#1}}\xspace}
\newcommand{\Oh}{\mathcal{O}}
\newcommand{\cA}{\mathcal{A}}
\newcommand{\cL}{\mathcal{L}}
\newcommand{\cH}{\mathcal{H}}
\newcommand{\cJ}{\mathcal{J}}
\newcommand{\cU}{\mathcal{U}}
\newcommand{\cX}{\mathcal{X}}
\newcommand{\cY}{\mathcal{Y}}
\newcommand{\cZ}{\mathcal{Z}}
\newcommand{\wt}[1]{\widetilde{#1}}
\newcommand{\SAT}{\textsc{3SAT}\xspace}
\newcommand{\EQNSAT}{\textsc{EQN-SAT}\xspace}
\newcommand{\EQNID}{\textsc{EQN-ID}\xspace}
\newcommand{\ProgSAT}{\textsc{ProgramSAT}\xspace}
\newcommand{\KColoring}[1]{\ensuremath{#1}\textsc{-Coloring}\xspace}
\newcommand{\sse}{\subseteq}
\newcommand\ie{i.e., }
\newcommand\eg{e.g.\xspace}
\tikzset{
	ncbar angle/.initial=90,
	ncbar/.style={
		to path=(\tikztostart)
		-- ($(\tikztostart)!#1!\pgfkeysvalueof{/tikz/ncbar angle}:(\tikztotarget)$)
		-- ($(\tikztotarget)!($(\tikztostart)!#1!\pgfkeysvalueof{/tikz/ncbar angle}:(\tikztotarget)$)!\pgfkeysvalueof{/tikz/ncbar angle}:(\tikztostart)$)
		-- (\tikztotarget)
	},
	ncbar/.default=0.5cm,
}
\title{Hardness of equations over finite solvable groups under the exponential time hypothesis} 
\author{Armin Wei\ss}{%
Universität Stuttgart,
Institut für Formale Methoden der Informatik (FMI), Germany}{armin.weiss@fmi.uni-stuttgart.de}{https://orcid.org/0000-0002-7645-5867}{Funded by DFG project DI 435/7-1.}
\titlerunning{Equations over finite solvable groups}
\authorrunning{A. Wei\ss}
\keywords{equations in groups, solvable groups, exponential time hypothesis}  
\begin{document}

\maketitle

	\begin{abstract}
Goldmann and Russell (2002) initiated the study of the complexity of the equation satisfiability problem in finite groups by showing that it is in \P for nilpotent groups  while it is \NP-complete for non-solvable groups. Since then, several results have appeared showing that the problem can be solved in polynomial time in certain solvable groups of Fitting length two. In this work, we present the first lower bounds for the equation satisfiability problem in finite solvable groups: under the assumption of the exponential time hypothesis, we show that it cannot be in \P for any group of Fitting length at least four and for certain groups of Fitting length three. Moreover, the same hardness result applies to the equation identity problem.
	\end{abstract}

\iffull\else
\setcounter{page}{0} 
\newpage
\fi


  \section{Introduction}
  
  The study of equations over algebraic structures has a long history in mathematics. Some of the first explicit decidability results in group theory are due to Makanin \cite{mak77}, who showed that equations over free groups are decidable. Subsequently several other decidability and undecidability results as well as complexity results on equations over infinite groups emerged (see \cite{DiekertE17icalpshort,GarretaMO20,LohSen06,Romankov79} for a random selection). For a fixed group $G$, the equation satisfiability problem $\EQNSAT$ is as follows: given an expression $\alpha \in (G\cup  \cX \cup \cX^{-1})^*$ where $\cX$ is some set of variables, the question is whether there exists some assignment $\sigma: \cX \to G$ such that $\sigma(\alpha)=1$ (here $\sigma$ is extended to expressions in the natural way~-- $\cX^{-1}$ is a disjoint copy of $\cX$ representing the inverses of $\cX$). Likewise \EQNID is the problem, given an expression, decide whether it evaluates to 1 under \emph{all} assignments.
  
   Henceforth, all groups we consider are finite. In this case, equation satisfiability and related questions are clearly decidable by an exhaustive search. 
  Still the complexity is an interesting topic of research: its study has been initiated by Goldmann and Russell \cite{GoldmannR02}, who showed that satisfiability of systems of equations can be decided in $\P$ if and only if the group is abelian (assuming $\P \neq \NP$)~-- otherwise, the problem is \NP-complete. They also obtained some results for single equations: $\EQNSAT$ is \NP-complete for non-solvable groups, while for nilpotent groups it is in \P. 
  This left the case of solvable but non-nilpotent groups open. 
  Indeed, Burris and Lawrence raised the question whether $\EQNID(G) \in \P$ for all finite solvable groups $G$ \cite[Problem 1]{BurrisL04}. Moreover, Horváth \cite{Horvath11} conjectured a positive answer.

 \subparagraph*{Contribution.} In this work we give a negative answer to this question assuming the exponential time hypothesis by showing the following result: 
    \theoremstyle{plain}
  \newtheorem{corollaryA}{Corollary}
  \renewcommand*{\thecorollaryA}{\Alph{corollaryA}}
  \begin{corollaryA}\label{cor:mainIntro}
  	Let $G$ be finite solvable group and assume that either
  	\begin{itemize}
  		\item the Fitting length of $G$ is at least four, or
  		\item the Fitting length of $G$	is three and there is no Fitting-length-two normal subgroup whose index is a power of two.
  	\end{itemize}
  	Then $\EQNSAT(G)$ and $\EQNID(G)$ are not in \P under the exponential time hypothesis.
  \end{corollaryA}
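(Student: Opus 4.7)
The plan is to deduce Corollary A from a main theorem of the paper that I expect establishes: if a finite group $G$ admits a polynomial-size group-word representation of the Boolean \AND of $n$ arguments, then $\EQNSAT(G)$ and $\EQNID(G)$ are not in $\P$ under ETH. The underlying reduction is the standard one from \SAT: each clause becomes a short equation that holds iff the clause is satisfied, and the whole formula is assembled via the compact \AND-gadget, yielding a polynomial-sized equation whose satisfiability mirrors that of the input. Under ETH, \SAT on $n$ variables requires $2^{\Omega(n)}$ time, so a polynomial-length reduction rules out $\EQNSAT(G) \in \P$. For \EQNID, the dual reduction from \UnSAT with an \AND-of-negations gadget translates unsatisfiability into identical triviality.

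Granting this main theorem, the task in Corollary A is purely structural: verify that the Fitting-length hypotheses force $G$ not to be \AND-weak. The natural tool is a proposition (of the kind the acknowledgements refer to as the refined non-\AND-weakness bound using steps of the lower central series) which walks down the Fitting series $1 < \Fit_1(G) < \Fit_2(G) < \cdots < \Fit_k(G) = G$ and, at each level, extracts a commutator pair with non-trivial image in the next layer. Iterating commutators along this tower builds an \AND gate whose size is polynomial because each Fitting quotient absorbs one level of the would-be exponential blow-up rather than doubling it. When $k \geq 4$, four Fitting layers give enough room for the construction to run without any auxiliary hypothesis.

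For Fitting length exactly three, the margin is tight. The exceptional case is precisely when the top Fitting quotient $G/\Fit_2(G)$ is a $2$-group acting below, because in a $2$-group the commutator ladder can collapse through repeated squarings~-- the very phenomenon that makes nilpotent $2$-groups robustly \AND-weak. The hypothesis that no Fitting-length-two normal subgroup has $2$-power index rules out exactly this obstruction, leaving an odd prime available in the top quotient. The plan is to make this explicit: under the hypothesis, $G$ admits a Fitting-length-two normal subgroup $N$ with $[G:N]$ having an odd prime divisor, and running the gadget through the odd-order part of $G/N$ avoids the $2$-group collapse.

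The main obstacle, in my view, is the compact \AND-gadget itself rather than the structural translation for Corollary A. Because nilpotent groups are \AND-weak, the construction cannot live inside a single Fitting layer; it must interleave elements from different layers so that each commutator step drops one layer in the series, exchanging nesting depth for series length. Keeping this bookkeeping polynomial, and simultaneously handling the \EQNID variant cleanly, is where the proof's real weight lies; the Fitting-length-three case of Corollary A is delicate because one pushes the scheme to its arithmetic limit, and the parity hypothesis is precisely what prevents the construction from failing at the boundary.
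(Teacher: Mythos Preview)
Your proposal rests on a central misconception: you assume the paper shows that groups satisfying the hypotheses of Corollary~A admit \emph{polynomial-size} \AND gadgets, and then runs a polynomial reduction from \SAT. It does not, and indeed proving such a thing would refute the \AND-weakness conjecture, which the paper leaves open. What the paper actually establishes (\cref{prop:notANDweakrefined}) is only a \emph{subexponential} upper bound of $2^{\Oh(n^{1/(d-1)})}$ on the length of $G$-programs for the $n$-input \AND when $G$ has Fitting length $d$. For $d=3$ this is $2^{\Oh(\sqrt{n})}$, not polynomial.

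Consequently the reduction is of a different shape. The paper reduces \KColoring{C} (not \SAT) to \EQNSAT/\EQNID: given a graph with $m$ edges, it splits the edges into $\sqrt{m}$ batches of $\sqrt{m}$ edges each and encodes the conjunction via two nested layers of iterated commutators, producing an equation $\delta$ of length $2^{\Oh(\sqrt{m})}$. Under ETH, \KColoring{C} requires time $2^{\Omega(m)}$, so any algorithm for \EQNSAT of running time $2^{o(\log^2 N)}$ in the input length $N$ would contradict ETH. This is a \emph{quasipolynomial} lower bound, not an \NP-hardness result, and your polynomial-reduction picture cannot reach it.

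Your reading of the structural hypotheses is also off. The role of the odd prime in the Fitting-length-three case is not to avoid ``commutator collapse through repeated squarings''; rather, the reduction needs $C = \abs{G/H} \geq 3$ colours, where $H$ is a specific normal subgroup containing $\cU_2G$ (\cref{thm:main}). Passing to the inducible subgroup $\wt G$ generated by all $2^\nu$-th powers forces $\abs{\wt G/\cU_2\wt G}$ to be odd, hence $\abs{\wt G/H}\geq 3$. The Fitting-length-$\geq 4$ case is handled by \emph{reducing to} the Fitting-length-three case via quotients: $\cL_3G$ is inducible (for \EQNSAT) and $\cU_{d-3}G$ is atomically universally definable (for \EQNID), so \cref{lem:inducibleEQN} and \cref{lem:univTAUT} transfer the lower bound upward. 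There is no separate four-layer \AND construction.
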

To the best of our knowledge, this constitutes the first hardness results for $\EQNSAT(G)$ and $\EQNID(G)$ if $G$ is solvable.%
\footnote{Recently (a preprint appeared only days after the submission of this paper), in \cite{IdziakKK20} Idziak, Kawa\l{}ek, and Krzaczkowski succeeded to show that $\EQNSAT(S_4)$ is not in \P under the exponential time hypothesis ($S_4$ denotes the symmetric group over four elements). Moreover, they proved similar results as in this work for the case of algebras from congruence modular varieties. This complements our main result Corollary~\ref{cor:mainIntro}. Indeed, a joint paper proving a quasipolynomial lower bound on \EQNSAT and \EQNID for \emph{all} finite groups of Fitting length three can be found in \cite{IdziakKKW20arxiv}.
}
The Fitting length of a group $G$ is the minimal $d$ such that there is a sequence $1 = G_0 \Nleq \cdots \Nleq G_d = G$ with all quotients $G_{i+1} /G_i$ nilpotent.
  
  Moreover, we show that if $S$ is a semigroup with a group divisor (\ie a group which is a quotient of a subsemigroup of $S$) meeting the requirements of Corollary~\ref{cor:mainIntro}, $\EQNSAT(S)$  (here the input consists of two expressions)  is also not in \P under the exponential time hypothesis.  
   Finally, using the same ideas as for our main result, we derive an upper bound of $2^{\Oh(n^{1/(d-1)})}$ for the length of the shortest $G$-program (definition see below) for the $n$-input \AND function in a finite solvable group of Fitting length $d \geq 2$. Notice that a corresponding $2^{n^{\Omega(1)}}$ lower bound would imply that $\EQNSAT(G)$ and $\EQNID(G)$ can be solved in quasipolynomial time for finite solvable groups $G$.

  \subparagraph*{General approach.}
  The complexity of \EQNSAT is closely related to the complexity of the satisfiability problem for $G$-programs (denoted by \ProgSAT~-- for a definition see \cref{sec:programs}). Indeed, \cite{BarringtonMMTT00} gives a reduction from \EQNSAT to \ProgSAT (be aware that, while the problems \EQNSAT and \ProgSAT are well-defined for finitely generated infinite groups, in general, such a reduction exists only in the case of finite groups). Moreover, also \ProgSAT is in \P for nilpotent groups and \NP-complete for non-solvable groups \cite{BarringtonST90}.

  In order to show hardness of these problems, one usually reduces some \NP-complete problem like \SAT or \KColoring{C} to them. Typically, this requires to encode big logical conjunctions into the group $G$.
 Therefore, the complexity of these problems is linked to the length of the shortest $G$-program for the \AND function. 
 	Indeed, \cite[Theorem 4]{BarringtonMMTT00} shows that, if the \AND function can be computed by a \P-uniform family of $G$-programs of polynomial length, then $\ProgSAT(G\wr C_k)$ for $k\geq 4$ is \NP-complete (here $C_k$ denotes the cyclic group of order $k$; \P-uniform means that the $n$-input $G$-program can be computed in time polynomial in $n$).
  Thus, if there exists a solvable group with efficiently computable polynomial length $G$-programs for the \AND function, then there is a solvable group with an \NP-complete \ProgSAT problem.

  It is well-known that $G$-programs describe the circuit complexity class \CC \cite{McKenziePT91} with the depth of the circuit relating to the Fitting length of the group. One can make a depth size trade-off for the \AND function using a divide-and-conquer approach: Assume there is a circuit of depth two and size $2^n$ for the $n$-input \AND (which is the case by \cite{Barrington85}). Since the $n$-input \AND can be decomposed as $\sqrt{n}$-input \AND of $\sqrt{n}$ many $\sqrt{n}$-input \AND{}s, we obtain a \CC circuit of depth $4$ and size roughly $2^{\sqrt{n}}$.

  This observation plays a crucial role for our results: it allows us to reduce an $m$-edge \KColoring{C} instance to an equation of size roughly $2^{\sqrt{m}}$. We compare this to the exponential time hypothesis (ETH), which conjectures that $n$-variable \SAT cannot be solved in time $2^{o(n)}$. ETH implies that \KColoring{C} cannot be solved in time $2^{o(m)}$, which gives us a quasipolynomial lower bound on \EQNSAT and \EQNID. 
  Notice that in the literature there are several other quasipolynomial lower bounds building on the exponential time hypothesis~-- see 
  \cite{AaronsonIM14,BravermanKRW17,BravermanKW15} for some examples.

  \subparagraph*{Outline.}

  In \cref{sec:prelims}, we fix our notation and state some basic results on inducible and atomically universally definable subgroups. Some of these observations are well-known, while others, to the best of our knowledge, have not been stated explicitly.
  \Cref{sec:programs} gives a little excursion to the complexity of the \AND-function in terms of $G$-programs over finite solvable groups deriving an upper bound $2^{\Oh(n^{1/(d-1)})}$ if $d \geq 2$ is the Fitting length of $G$.

  \Cref{sec:reduction} and \cref{sec:consequences} are the main part of our paper:  we reduce the \KColoring{C} problem to \EQNSAT and \EQNID. For the reduction, we need some special requirements on the group $G$. In \cref{sec:consequences} we show that actually the requirements of Corollary~\ref{cor:mainIntro} are enough using the concept of inducible and atomically universally definable subgroups.
  Finally, in \cref{cor:semigroup} we examine consequences to \EQNSAT in semigroups.

  \subparagraph*{Related work on equations.}
 Since the work of Goldman and Russell \cite{GoldmannR02} and Barrington et.\,al.\ \cite{BarringtonMMTT00}, a long list of literature has appeared investigating \EQNID and \EQNSAT in groups and other algebraic structures. 
  In \cite{BurrisL04} it is shown that \EQNID is in \P for nilpotent groups as well as for dihedral groups $D_k$ where $k$ is odd. 
  Horváth resp.\ Horváth and Szabó \cite{Horvath15,HorvathS06} extended these results by showing the following among other results: $\EQNSAT(G) $ is in $ \P$ for $G = C_n \rtimes B$ with $B$ abelian, $n=p^k$ or $n=2p^k$ for some prime $p$ and \EQNID is in \P for semidirect products $G = C_{n_1} \rtimes (C_{n_2} \rtimes \cdots \rtimes ( C_{n_k} \rtimes(A\rtimes B)))$ with $A,B$ abelian (be aware that such a group is two-step solvable).
  Furthermore, in \cite{Foldvari17} it is proved that $\EQNSAT(G) \in \P$ for so-called semi-pattern groups. 
 Finally, in \cite{FoldvariH19} Földvári and Horváth established that \EQNSAT is in \P for the semidirect product of a $p$-group and an abelian group and that \EQNID is in \P for the semidirect product of a nilpotent group with an abelian group.
   Notice that all these groups have in common that their Fitting length is at most two.

In  \cite{HorvathS11,HorvathS12} the \EQNSAT and \EQNID problems for generalized terms are introduced. Here a generalized term means an expression which may also use commutators or even more complicated terms inside the input expression. Using commutators is a more succinct representation, which allows for showing that  \EQNSAT is \NP-complete and \EQNID is \coNP-complete in the alternating group $A_4$ \cite{HorvathS12}. 
In \cite{Kompatscher19} this result is extended by showing that, with commutators and the generalized term $w(x,y_1,y_2,y_3) = x^8[x,y_1,y_2,y_3]$, \EQNSAT is \NP-complete and \EQNID is \coNP-complete for all non-nilpotent groups.

There is also extensive literature on equations in other algebraic structures~-- for instance, 
\cite{AlmeidaVG09,BarringtonMMTT00,JacksonM06,Kisielewicz04,KlimaTT07,Klima09,Seif05,SeifS06,SzaboV04} in semigroups. 
 We only mention two of them explicitly:
\cite{Kisielewicz04} showed that identity checking (\EQNID without constants in the input) in semigroups is \coNP complete. 
Moreover, among other results, \cite{AlmeidaVG09} reduces the identity checking problem in the direct product of maximal subgroups to identity checking in some semigroup.

  \section{Preliminaries}\label{sec:prelims}
  
  The set of words over some alphabet $\Sigma$ is denoted by $\Sigma^*$. The length of a word $w \in \Sigma^*$ is denoted by $\abs{w}$. We denote the interval of integers $\oneset{i, \dots, j}$ by $\interval{i}{j}$.

\subparagraph*{Complexity.}
We use standard notation from complexity theory. In several cases we use the notion of \AC many-one reductions (denoted by $\leq_{\mathrm{m}}^{\AC}$) meaning that the reducing function can be computed in \AC (\ie by a polynomial-size, constant-depth Boolean circuit). The reader unfamiliar with this terminology may think about logspace or polynomial time reductions. Also be aware that in order to obtain \AC many-one reductions in most cases we need the presence of a letter representing the group identity for padding reasons.

\subparagraph*{Exponential time hypothesis.}  
The exponential time hypothesis (ETH) is the conjecture that there is some $\delta > 0$ such that every algorithm for $\SAT$ needs time $\Omega(2^{\delta n})$ in the worst case where $n$ is the number of variables of the given $\SAT$ instance. By the sparsification lemma \cite[Thm.~1]{ImpagliazzoPZ01} this is equivalent to the existence of some $\epsilon > 0$ such that every algorithm for $\SAT$ needs time $\Omega(2^{\epsilon (m+n)})$ in the worst case where $m$ is the number of clauses of the given $\SAT$ instance (see also \cite[Thm.~14.4]{CyganFKLMPPS15}). In particular, under ETH there is no algorithm for \SAT running in time $2^{o(n+m)}$.

\subparagraph*{\KColoring{C}.} A $C$-coloring for $C\in \N$ of a graph $\Gamma = (V,E)$ is a map $\chi:V \to \interval{1}{C}$. A coloring $\chi$ is called \emph{valid} if $\chi(u) \neq \chi(v)$ whenever $\oneset{u,v }\in E$.
The problem \KColoring{C} is as follows: given an undirected graph $\Gamma = (V,E)$, the question is whether there is a valid $C$-coloring of $\Gamma$. The \KColoring{C} problem is one of the classical \NP-complete problems for $C\geq 3$. Moreover, by \cite[Thm.~14.6]{CyganFKLMPPS15}, \KColoring{3} cannot be solved in time $2^{o(\abs{V} + \abs{E})}$ unless ETH fails. Since  \KColoring{3} can be reduced to  \KColoring{C} for fixed $C \geq 3$ by introducing only a linear number of additional edges and a constant number of vertices, it follows for every $C\geq 3$ that also \KColoring{C} cannot be solved in time $2^{o(\abs{V} + \abs{E})}$ unless ETH fails.

\subparagraph*{Commutators and Fitting series.}
Throughout, we only consider finite groups $G$. We use notation similar to \cite{Robinson96book}. 
  We write $[x,y] = x^{-1}y^{-1}xy$ for the commutator and $x^y = y^{-1}xy$ for the conjugation. Moreover, we write $[x_1, \dots, x_n] = [[x_1, \dots, x_{n-1}],x_n]$ for $n\geq 3$.

As usual for subsets $X,Y \sse G$, we write $\gen{X}$ for the subgroup generated by $X$ and we define $[X,Y] = \genr{[x,y]}{x \in X, y \in Y}$ and $[X_1,\dots, X_k] = [[X_1,\dots,X_{k-1}], X_k]$ for $X_1,\dots, X_k \sse G$. In contrast, we write
$[X,Y]_{\Set} = \set{[x,y]}{x \in X, y \in Y}$  (thus, $[X,Y] = \gen{[X,Y]_{\Set}}$) and $[X_1,\dots, X_k]_{\Set} = [[X_1,\dots,X_{k-1}]_{\Set}, X_k]_{\Set}$.

 Finally, we denote the set $\set{g^x}{x\in X}$ with $g^X$  (be aware that here we differ from \cite{Robinson96book}) and define $X^Y = \set{x^y}{x\in X ,y \in Y}$.

\begin{lemma}\label{lem:setcommutator}
	If $X_i^G=X_i \sse G$ for $i= 1, \dots, k$, then
	\[ [\gen{X_1},\dots,\gen{ X_k}] = \gen{[X_1,\dots, X_k]_{\Set}}.\]
\end{lemma}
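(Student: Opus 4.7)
The inclusion $\langle [X_1,\dots,X_k]_{\Set}\rangle \sse [\gen{X_1},\dots,\gen{X_k}]$ is immediate since every set commutator already lies on the right-hand side, which is a subgroup. The content of the lemma is the reverse inclusion, and I will prove it by induction on $k$, with the real work being the base case $k=2$.

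For the base case, fix $a \in \gen{X_1}$ and $b \in \gen{X_2}$, and let $N = \gen{[X_1,X_2]_{\Set}}$. The crucial observation is that $N$ is normal in $G$: since $X_1^G = X_1$ and $X_2^G = X_2$, we have $[x_1,x_2]^g = [x_1^g, x_2^g] \in [X_1,X_2]_{\Set}$ for all $g \in G$, so $[X_1,X_2]_{\Set}$ (and hence $N$) is $G$-invariant. Now write $a = x_1^{\eps_1}\cdots x_n^{\eps_n}$ with $x_i \in X_1$, $\eps_i \in \{\pm 1\}$, and iterate the identity $[uv,b] = [u,b]^v[v,b]$ to express $[a,b]$ as a product of conjugates of terms $[x_i^{\eps_i},b]$. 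Because $N$ is normal, these conjugations are harmless, so it suffices to show $[x_i^{\eps_i},b] \in N$. Expanding $b$ over $X_2$ in the same way via $[x,uv] = [x,v][x,u]^v$ reduces the problem to showing $[x,y^{\pm 1}] \in N$ for $x \in X_1$, $y \in X_2$. The positive case is the definition of $N$; the negative case follows from the identity $1 = [x,yy^{-1}] = [x,y^{-1}][x,y]^{y^{-1}}$, which gives $[x,y^{-1}] = \bigl([x,y]^{y^{-1}}\bigr)^{-1} \in N$. An analogous trick handles $[x^{-1},b]$ using $1 = [xx^{-1},b]$.

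For the inductive step ($k \geq 3$), set $Y = [X_1,\dots,X_{k-1}]_{\Set}$. Conjugation distributes over commutators, so $[x_1,\dots,x_{k-1}]^g = [x_1^g,\dots,x_{k-1}^g]$, and since each $X_i$ is $G$-invariant we obtain $Y^G = Y$. By the induction hypothesis $\gen{Y} = [\gen{X_1},\dots,\gen{X_{k-1}}]$, and then
\[
[\gen{X_1},\dots,\gen{X_k}] = [\gen{Y},\gen{X_k}] = \gen{[Y,X_k]_{\Set}} = \gen{[X_1,\dots,X_k]_{\Set}},
\]
where the middle equality is the base case applied to the $G$-invariant sets $Y$ and $X_k$.

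The only point that requires care is the base case: one has to be disciplined about expanding $[a,b]$ into set commutators while keeping track of where conjugations land, and to handle the inverses of generators via the standard commutator identities. The inductive step is then essentially formal, relying only on the $G$-invariance of $Y$, which is where the hypothesis $X_i^G = X_i$ is used a second time.
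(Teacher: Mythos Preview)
Your proof is correct and follows the same inductive structure as the paper's. The only difference is in the base case $k=2$: the paper cites \cite[5.1.7]{Robinson96book} (which says that $[\langle X\rangle,\langle Y\rangle]$ is the normal closure of $[X,Y]_{\Set}$ in $\langle X,Y\rangle$) and then observes that $G$-invariance of $X$ and $Y$ makes taking the normal closure superfluous; you instead give a direct, self-contained argument via the identities $[uv,w]=[u,w]^v[v,w]$ and $[u,vw]=[u,w][u,v]^w$ together with normality of $\langle[X_1,X_2]_{\Set}\rangle$ in $G$. Your base case is essentially a proof of the cited fact specialized to the $G$-invariant setting, and your inductive step (showing $Y=[X_1,\dots,X_{k-1}]_{\Set}$ is $G$-invariant and then applying the two-variable case to $Y$ and $X_k$) is identical to the paper's.
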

\begin{proof}
By \cite[5.1.7]{Robinson96book}, we have $[\gen{X}, \gen{Y}] = \gen{[X, Y]^{\gen{X}\gen{Y}}}$ for arbitrary $X,Y \sse G$. Thus, if $X=X^G$ and $Y = Y^G$, we have $[\gen{X}, \gen{Y}] = [X, Y]$. 
We use this to show the lemma by induction:
\begin{align*}
[\gen{X_1},\dots,\gen{ X_k}] &= \big[[\gen{X_1},\dots,\gen{X_{k-1}}], \gen{ X_k}\big]\\
&=\big[ \gen{[X_1,\dots, X_{k-1}]_{\Set}}, \gen{ X_k}\big]\tag{by induction}\\
&=\big[ [X_1,\dots, X_{k-1}]_{\Set},  X_k\big] \tag{by \cite[5.1.7]{Robinson96book}}\\
&=\gen{[X_1,\dots, X_k]_{\Set}}\qedhere
\end{align*}
\end{proof}

For $x,y \in G$, we write $\mcomm{x}{k}{y} = [x,\underbrace{y,\dots,y}_{k\text{ times}}]$ and likewise for $X,Y \sse G$, we write $\mcomm{X}{k}{Y} = [X,\underbrace{Y,\dots,Y}_{k\text{ times}}]$ and $\smcomm{k}{Y} = [\underbrace{Y,\dots,Y}_{k\text{ times}}]$ and analogously  $\mcomm{X}{k}{Y}_{\Set} $ and $\smcomm{k}{Y}_{\Set}$.

Since $G$ is finite, there is some $M = M(G) \in \N$ such that $\mcomm{X}{M}{Y} = \mcomm{X}{i}{Y}$ for all $i \geq M$ and all $X,Y\sse G$ with $X^G=X$ and $Y^G=Y$ (notice that $\mcomm{X}{i}{Y} \leq \mcomm{X}{j}{Y}$ for $j\leq i$ due to the normality of $[X,Y]$). It is clear that $M =\abs{G}$ is large enough, but typically much smaller values suffice.

\begin{lemma}\label{lem:addGincomm}
	For all $X, Y \sse G$ with $X^G = X$ and $Y = Y^G$ we have $\fitcomm{X}{Y} = \fitcomm{[X,G]}{Y}$.
\end{lemma}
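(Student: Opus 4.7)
The plan is to pass from iterated set-commutators to iterated subgroup-commutators via \cref{lem:setcommutator}, and then exploit the trivial inclusion $[H,K] \leq [H,G]$ that is forced by $K \leq G$. First I would set $H = \gen{X}$ and $K = \gen{Y}$; since $X^G = X$ and $Y^G = Y$, both $H$ and $K$ are normal subgroups of $G$. Applying \cref{lem:setcommutator} to $X$ together with $M$ copies of $Y$ gives $\fitcomm{X}{Y} = \mcomm{H}{M}{K}$, now regarded as a subgroup commutator. The set $[X,G]_{\Set}$ is likewise closed under $G$-conjugation (since $[x,g]^h = [x^h, g^h]$ and $X^G = X$), so another application of \cref{lem:setcommutator} first identifies $[X,G] = [H,G]$ as subgroups, and then yields $\fitcomm{[X,G]}{Y} = \mcomm{[H,G]}{M}{K}$. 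The statement thus reduces to the subgroup identity $\mcomm{H}{M}{K} = \mcomm{[H,G]}{M}{K}$.

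The inclusion $\mcomm{[H,G]}{M}{K} \leq \mcomm{H}{M}{K}$ is immediate from $[H,G] \leq H$ together with monotonicity of the iterated commutator in its first argument. For the reverse inclusion, the key observation is that since $K \leq G$, every generator $[h,k]$ of $[H,K]$ already lies in $[H,G]$; hence $[H,K] \leq [H,G]$. Iterating $K$-commutators on both sides gives
\[ \mcomm{H}{M+1}{K} = \mcomm{[H,K]}{M}{K} \leq \mcomm{[H,G]}{M}{K}. \]
Finally, the stabilization built into the definition of $M = M(G)$ forces $\mcomm{H}{M+1}{K} = \mcomm{H}{M}{K}$, so the desired inclusion $\mcomm{H}{M}{K} \leq \mcomm{[H,G]}{M}{K}$ follows.

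The only point requiring care will be that the single constant $M = M(G)$ is uniform, so that it may be used simultaneously for $\mcomm{H}{M}{K}$, $\mcomm{[H,G]}{M}{K}$, and the intermediate term $\mcomm{H}{M+1}{K}$; this is exactly the uniformity guaranteed in the paragraph introducing $M$. Beyond this bookkeeping, the argument is nothing more than the trivial inclusion $[H,K] \leq [H,G]$ combined with stabilization, so there is no substantial obstacle.
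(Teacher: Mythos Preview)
Your proposal is correct and follows essentially the same argument as the paper: both directions rest on the inclusions $[H,G]\leq H$ (for $\fitcomm{[X,G]}{Y}\leq\fitcomm{X}{Y}$) and $[H,K]\leq[H,G]$ together with the stabilization $\mcomm{H}{M+1}{K}=\mcomm{H}{M}{K}$ (for the reverse). The only difference is cosmetic: the paper argues directly with the sets $X,Y$ in two lines, whereas you first pass to $H=\gen{X}$, $K=\gen{Y}$ via \cref{lem:setcommutator}; this detour is sound but not needed.
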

\begin{proof}
	We have $[X,G] \leq \gen{X}$ because $[x,g] = x^{-1} x^g \in X$. Thus, the inclusion right to left follows. The other inclusion is because  $\fitcomm{X}{Y} = \mcomm{X}{M+1}{Y} \leq [X,G, \kern.1em_{M}\kern.1em Y] = \fitcomm{[X,G]}{Y}$.
\end{proof}

The $k$-th term of the lower central series is $\gamma_kG = \mcomm{G}{k}{G}$. The \emph{nilpotent residual} of $G$ is defined as $\gamma_\infty G = \gamma_MG$ where $M$ is as above (\ie $\gamma_\infty G = \gamma_iG$ for every $i \geq M$). Recall that a finite group $G$ is nilpotent if and only if     $\gamma_\infty G = 1$.

The \emph{Fitting} subgroup $\Fit(G)$ is the union of all nilpotent normal subgroups. Let $G$ be a finite solvable group. It is well-known that $\Fit(G)$ itself is a nilpotent normal subgroup (see \eg \cite[Satz 4.2]{Huppert67}). The \emph{upper Fitting series} 
\[1 = \cU_0G \Nle \cU_1G \Nle \cdots \Nle \cU_k G = G \]
is defined by $\cU_{i+1}G/\cU_iG = \Fit(G/\cU_iG)$.
The \emph{lower Fitting series}
\[1 = \cL_dG \Nle \cdots \Nle \cL_1 G  \Nle \cL_0 G = G \]
is defined by $\cL_{i+1}G = \gamma_\infty(\cL_{i}G)$. We have $d=k$  (see \eg \cite[Satz 4.6]{Huppert67}) and this number is called the \emph{Fitting length} $\FitL(G)$ (sometimes also referred to as \emph{nilpotent length}). 
The following fact can be derived by a straightforward induction from the characterization of $\Fit(G)$ as largest nilpotent normal subgroup (for a proof see \eg \cite{mathoverflowW20}):
\begin{lemma}\label{lem:Fitting}
	Let $H \Nleq G$ be a normal subgroup. Then for all $i$, we have $\cU_i H = \cU_iG \cap H$. In particular,
	\begin{enumerate}
		\item if $\FitL(H) = i$, then $H \leq \cU_iG$,
		\item if $g \in \cU_{i}G \setminus \cU_{i-1}G$, then $\FitL(\gen{g^G}) = i$.
	\end{enumerate}
\end{lemma}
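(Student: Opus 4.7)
The plan is to prove the main equality $\cU_i H = \cU_i G \cap H$ by induction on $i$, with the base case $i=0$ trivial since $\cU_0 H = 1 = \cU_0 G \cap H$. The inductive step is driven by the general fact that, for a finite group $G$ and a normal subgroup $N \Nleq G$, one has $\Fit(N) = N \cap \Fit(G)$. I would justify this as follows: $N \cap \Fit(G)$ is normal in $N$ and nilpotent (being a subgroup of $\Fit(G)$), hence contained in $\Fit(N)$; conversely, $\Fit(N)$ is characteristic in $N$ and $N \Nleq G$, so $\Fit(N) \Nleq G$, and since $\Fit(N)$ is nilpotent it is contained in $\Fit(G)$, yielding $\Fit(N) \leq N \cap \Fit(G)$.

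For the inductive step, I would pass to $\bar G = G/\cU_{i-1}G$ and set $\bar H = H\cU_{i-1}G / \cU_{i-1}G$. By the second isomorphism theorem together with the induction hypothesis $H \cap \cU_{i-1}G = \cU_{i-1}H$, we obtain $\bar H \cong H/\cU_{i-1}H$, so $\Fit(\bar H)$ corresponds exactly to $\cU_i H / \cU_{i-1}H$. Applying the hereditary lemma to $\bar H \Nleq \bar G$, this equals $\bar H \cap \Fit(\bar G) = \bar H \cap (\cU_i G/\cU_{i-1}G)$. Unwinding via the modular law (using $\cU_{i-1}G \leq \cU_i G$) gives $(H \cap \cU_i G)\cU_{i-1}G / \cU_{i-1}G$, which in turn corresponds to $(H \cap \cU_i G)/\cU_{i-1}H$ under the isomorphism above; pulling back to $H$ yields $\cU_i H = H \cap \cU_i G$, as required.

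Part (i) then follows at once: if $\FitL(H) = i$ then $H = \cU_i H = H \cap \cU_i G \leq \cU_i G$. For part (ii), set $H = \gen{g^G}$ and $j = \FitL(H)$. Since $\cU_i G$ is normal in $G$ and contains $g$, it contains $H$, so by the main equality $H = H \cap \cU_i G = \cU_i H$, giving $j \leq i$. If we had $j < i$, then by (i), $H \leq \cU_{i-1}G$, contradicting $g \notin \cU_{i-1}G$; hence $j = i$.

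The main obstacle I anticipate is careful bookkeeping between $H$, $\bar H$, and $H/\cU_{i-1}H$ when invoking the modular law and the second isomorphism theorem; some care is needed to verify that the various correspondences really do match up on the nose. Conceptually, however, the whole statement reduces to the single observation $\Fit(N) = N \cap \Fit(G)$ for $N \Nleq G$, from which everything else is essentially formal.
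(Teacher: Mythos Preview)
Your proposal is correct and follows exactly the approach the paper indicates: the paper does not spell out the argument but merely states that the lemma ``can be derived by a straightforward induction from the characterization of $\Fit(G)$ as largest nilpotent normal subgroup'' and cites an external reference. Your key observation $\Fit(N) = N \cap \Fit(G)$ for $N \Nleq G$ is precisely this characterization in action, and your inductive passage to $G/\cU_{i-1}G$ together with the modular-law bookkeeping is the expected way to carry it out; the derivations of (i) and (ii) are likewise the intended immediate consequences.
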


\subparagraph*{Equations in groups.}

An \emph{expression} (also called a \emph{polynomial} in \cite{SeifS06,HorvathS06,Kompatscher19}) over a group $G$ is a word $\alpha$ over the alphabet $G \cup \cX \cup \cX^{-1}$ where $\cX$ is a set of variables. Here $\cX^{-1} $ denotes a formal set of inverses of the variables. Since we are dealing with finite groups only, a variable $X^{-1}\in \cX^{-1}$ for $X\in \cX$ can be considered as an abbreviation for $X^{\abs{G}-1}$.
Sometimes we write $\alpha(X_1, \dots, X_n)$ for an expression $\alpha$ to indicate that the variables occurring in $\alpha$ are from the set $\oneset{X_1, \dots, X_n}$. Moreover, if $\beta_1, \dots, \beta_n$ are other expressions, we write $\alpha(\beta_1, \dots, \beta_n)$ for the expression obtained by substituting each occurrence of a variable $X_i$ by the expression $\beta_i$.

An assignment for an expression $\alpha$ is a mapping $\sigma:\cX \to G$~-- here $\sigma$ is canonically extended by $\sigma( X^{-1}) = \sigma(X)^{-1}$ and $\sigma(g) = g$ for $g \in G$. An assignment $\sigma$ is \emph{satisfying} if $\sigma(\alpha)=1$ in $G$.
The problems $\EQNSAT(G)$ and $\EQNID(G)$ are as follows: for both of them the input is an expression $\alpha$.
For $\EQNSAT(G)$ the question is whether there \emph{exists} a satisfying assignment, for $\EQNID(G)$ the question is whether \emph{all} assignments are satisfying.

Notice that in the literature \EQNSAT is also denoted by POL-SAT \cite{SeifS06,HorvathS06} or $\mathsf{Eq}$ \cite{Kompatscher19},
while \EQNID is also referred to as POL-EQ (\eg in \cite{SeifS06,HorvathS06,KlimaTT07}) or $\mathsf{Id}$ \cite{Kompatscher19}.

If $\cX = \cY \cup \cZ$ with $\cY \cap \cZ = \emptyset$ and we are given assignments $\sigma_1: \cY \to G$ and $\sigma_2: \cZ \to G$, we obtain a new assignment $\sigma_1\cup \sigma_2$ defined by $(\sigma_1\cup \sigma_2) (X)  = \sigma_1(X)$ if $X \in \cY$ and  $(\sigma_1\cup \sigma_2) (X)  = \sigma_2(X)$ if $X \in \cZ$. We write $[X \mapsto g]$ for the assignment $\oneset{X} \to G$ mapping $X $ to $g$.

\subparagraph*{Inducible subgroups.} 
According to \cite{GoldmannR02}, we call a subset $S \sse G$ \emph{inducible} if there is some expression $\alpha \in (G \cup \cX \cup \cX^{-1})^*$ such that $S = \set{\sigma(\alpha)}{\sigma\colon \cX \to G}$. In this case we say that $\alpha$ \emph{induces} $S$. Notice that in a finite group every verbal subgroup is inducible. (A subgroup is called \emph{verbal} if it is generated by a set of the form $\set{\sigma(\alpha)}{\sigma\colon \cX \to G, \alpha \in \cA}$ where $\cA \sse (\cX \cup \cX^{-1})^*$ is a \emph{finite} set of expressions without constants.)
This shows the first three points of the following lemma (for $\gamma_1G$, see  also \cite[Lemma 5] {GoldmannR02}): 
\begin{lemma}\label{lem:inducible}
	Let $G$ be a finite group. Then
	\begin{enumerate}
		\item for every $k \in \N$, the subgroup generated by all $k$-th powers is inducible,
		\item every element $\gamma_kG$ of the lower central series is inducible, 
		\item every element $\cL_kG$ of the lower Fitting series is inducible,
		\item if $K\leq H \leq G$ and $K$ is inducible in $H$ and $H$ inducible in $G$, then $K$ is also inducible in $G$,
		\item if $H\leq G$ with $H = [G,H]$, then $H$ is inducible.
	\end{enumerate}	
\end{lemma}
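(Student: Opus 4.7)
The plan is to treat the five items by giving explicit expressions whose value sets equal the target subgroup. For items (i)--(iii) I will rely on the already-hinted fact that every verbal subgroup is inducible; for (iv) I will substitute inducing expressions into one another; for (v) I will construct an expression using elements of $H$ as constants, exploiting that the hypothesis $H = [G,H]$ forces $H \Nleq G$.

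For (i), fix $k$ and take $\alpha = X_1^k\cdots X_N^k$ for $N$ at least the diameter of $\gen{g^k : g \in G}$ with respect to its generating set of $k$-th powers (which is symmetric because $G$ is finite); this induces the subgroup. For (ii), analogously $\prod_{i=1}^{N}[X_{i,1},\dots,X_{i,k}]$ induces $\gamma_k G$. For (iv), if $\beta_H$ induces $H$ in $G$ and $\beta_K(X_1,\dots,X_m)$ induces $K$ in $H$, set $\alpha = \beta_K(\beta_H^{(1)},\dots,\beta_H^{(m)})$ with renamed copies $\beta_H^{(i)}$ on pairwise disjoint variable sets. Any assignment $\sigma\colon \cX \to G$ first produces values $\sigma(\beta_H^{(i)}) \in H$, on which $\beta_K$ then evaluates to an element of $K$; conversely every $k \in K$ equals $\beta_K(h_1,\dots,h_m)$ for suitable $h_i \in H$, each realised by some assignment to the $i$-th copy. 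Item (iii) now follows by induction on $k$: $\cL_0 G = G$ is inducible via $X$, and $\cL_{k+1}G = \gamma_\infty(\cL_k G) = \gamma_{M_k}(\cL_k G)$ for $M_k = M(\cL_k G)$ is inducible \emph{inside} $\cL_k G$ by (ii), which then lifts to $G$ by (iv).

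For (v), first observe that $H = [G,H]$ forces $H \Nleq G$: for $g \in G$ and $h \in H$ we have $h^g = h \cdot [h,g]$, and $[h,g] \in [H,G] = [G,H] = H$, so $h^g \in H$. By the paper's definition of commutator of subsets, $H = [H,G] = \gen{\set{[h,g]}{h\in H,\ g\in G}}$, and since $H$ is finite there is a bound $D$ such that every $h^\ast \in H$ is a product of at most $D$ elements of $\set{[h,g]}{h \in H,\ g \in G}$ (inverses of commutators are absorbed as repeated positive factors because $G$ is finite). Enumerating $H = \{h_1,\dots,h_m\}$, consider
\[ \alpha = \prod_{k=1}^{D}\prod_{i=1}^{m} [h_i, Z_{k,i}] \]
with the $h_i$ as constants and fresh variables $Z_{k,i}$. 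Every value of $\alpha$ lies in $H$ because every factor does. Conversely, given $h^\ast = [h_{i_1}, g_1]\cdots [h_{i_r}, g_r]$ with $r \leq D$, greedily embed $(i_1,\dots,i_r)$ into the $D$-fold repetition of $(1,\dots,m)$ as a subsequence (each successive index is met within the next $m$ positions, so $D$ rounds suffice), assigning $Z_{k,i_j} = g_j$ at the matched slots and $Z_{k,i} = 1$ elsewhere, so that each unmatched factor collapses via $[h_i,1] = 1$ and the product reduces to $h^\ast$. The crux of the whole lemma is precisely this item, since $H$ need not be verbal in $G$; the construction depends essentially on being allowed to use elements of $G$ as constants and on the trivialisation $[h,1] = 1$, which lets unused factors vanish without disturbing the order of the remaining ones.
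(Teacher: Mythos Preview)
Your proof is correct and follows essentially the same approach the paper indicates. The paper's own proof is very terse: it observes that (i)--(iii) are immediate because verbal subgroups of finite groups are inducible, says (iv) is by ``plugging in'', and for (v) merely cites \cite[Lemma~9]{Kompatscher19}. Your treatment of (i), (ii), (iv) matches this directly; your derivation of (iii) via (ii) and (iv) is just an unrolled version of the verbal argument. For (v) you supply an explicit construction where the paper only gives a reference; the idea---list the elements of $H$ as constants, take a long product of commutators $[h_i,Z_{k,i}]$, and kill unwanted factors via $[h_i,1]=1$---is exactly the mechanism behind the cited lemma, so there is no substantive difference.
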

The fourth point follows simply by ``plugging in'' an expression for $H$ inside an expression for $K$.
The last point follows from the proof of \cite[Lemma 9 ]{Kompatscher19}.

The notion of inducible subgroup turns out to be very useful for proving lower bounds on the complexity. Indeed, the following facts are straightforward: 
\begin{lemma}[{\,\!\cite[Lemma 8]{GoldmannR02}, \cite[Lemma 9,  10]{HorvathS11}}]\label{lem:inducibleEQN}
	Let $H \leq G$ be an inducible subgroup. Then
	\begin{itemize}
		\item $\EQNSAT(H) \leq_{\mathrm{m}}^{\Ac0} \EQNSAT(G)$, and 
		\item $\EQNID(H) \leq_{\mathrm{m}}^{\Ac0} \EQNID(G)$.
		\item If, moreover, $H$ is normal in $G$, then $\EQNSAT(G/H) \leq_{\mathrm{m}}^{\Ac0} \EQNSAT(G)$. 
	\end{itemize}	
\end{lemma}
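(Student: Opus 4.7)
The plan is to fix once and for all, for the ambient group $G$, an inducing expression $\beta = \beta(Y_1,\dots,Y_k)$ for $H$; such a $\beta$ exists by hypothesis and has length depending only on $G$ (not on the input). This single ``gadget'' drives all three reductions, and the fact that $\abs{\beta}$ is a constant is precisely what keeps the reductions inside \Ac0.

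For $\EQNSAT(H)\leq_{\mathrm m}^{\Ac0}\EQNSAT(G)$, given $\alpha(X_1,\dots,X_n)$ over $H$, I would produce an output expression by replacing every occurrence of $X_i$ (respectively $X_i^{-1}$) by a fresh copy $\beta(Y_{i,1},\dots,Y_{i,k})$ (respectively its formal inverse), using disjoint fresh variables for distinct $i$; constants from $H$ are retained unchanged since $H\leq G$. For any $G$-assignment $\tau$ of the fresh variables, the value $\tau(\beta(Y_{i,1},\dots,Y_{i,k}))$ lies in $H$, so the map $X_i\mapsto \tau(\beta(Y_{i,1},\dots,Y_{i,k}))$ is an $H$-assignment of $\alpha$; conversely, because $\beta$ \emph{induces} $H$, every $H$-assignment of $\alpha$ is realised by some $\tau$. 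This yields equivalence of satisfiability. Exactly the same transformation works for \EQNID, because ranging $\tau$ over all $G$-assignments of the $Y_{i,j}$ is equivalent (by inducibility) to ranging over all $H$-assignments of $X_1,\dots,X_n$.

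For the quotient reduction $\EQNSAT(G/H)\leq_{\mathrm m}^{\Ac0}\EQNSAT(G)$ with $H\Nleq G$, given $\bar\alpha$ over $G/H$ I would first lift each constant $\bar g$ occurring in $\bar\alpha$ to a fixed chosen preimage $g\in G$ (via a constant-size lookup table) and treat each variable $X_i$ as a $G$-variable, giving an expression $\alpha$ over $G$. The output is then $\alpha\cdot \beta(Y_1,\dots,Y_k)$ with fresh $Y_j$'s. Any $G$-assignment makes $\beta(Y_1,\dots,Y_k)\in H$, so the output equals $1$ iff $\sigma(\alpha)\in H$, iff the image of $\sigma(\alpha)$ in $G/H$ is $\bar 1$; for the forward direction I would take a satisfying $\bar\sigma$, lift its values arbitrarily to $\sigma$, and then choose the $Y_j$'s (using inducibility) so that $\beta(Y_1,\dots,Y_k)=\sigma(\alpha)^{-1}$.

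Each reduction is a local string rewrite with constant expansion factor $\abs{\beta}$ and simple variable renumbering, so \Ac0 computability is routine. I do not foresee a real obstacle. The only point requiring a moment of care is the \EQNID direction, where one must justify why universality over $G$-assignments to the $Y_{i,j}$ is the \emph{same} as universality over $H$-assignments to the $X_i$; this is immediate from the definition of inducibility but should be stated explicitly.
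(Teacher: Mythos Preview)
Your proposal is correct and follows essentially the same approach as the paper: fix an inducing expression $\beta$ for $H$, substitute (fresh-variable copies of) $\beta$ for each variable to handle the first two reductions, and append $\beta$ after lifting constants for the quotient reduction. The paper's own proof is only a brief sketch of exactly these ideas, so your write-up is in fact more detailed than what appears there.
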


Let us briefly sketch the ideas to see this lemma: Fix an expression $\beta$ inducing $H$. For first and second reduction, replace every occurring variable of a given equation by a copy of $\beta$ with disjoint variables. The third reduction simply appends $\beta$ to an input equation.

\subparagraph*{Atomically universally definable subgroups.} 
The situation for reducing $\EQNID(G/H)$ to $ \EQNID(G)$ is slightly more complicated. For this we need a new definition:
We call a subset $S \sse G$ \emph{atomically universally definable} if there is some expression $\alpha \in (G \cup \cX \cup \cX^{-1})^*$ where $\cX = \oneset{X} \cup \oneset{Y_1, Y_2,\dots }$ such that \[S = \set{g \in G}{(\sigma \cup [X \mapsto g])(\alpha) = 1 \text{ for all } \sigma\colon \oneset{Y_1, Y_2,\dots } \to G}.\] In this case we say that $\alpha$ \emph{atomically universally defines} $S$. (Notice that \emph{universally definable} usually is defined analogously but instead of a single equation $\alpha$ one allows a Boolean formula of equations.)
It is clear that the center of a group is atomically universally definable by the expression $[X,Y]$. This generalizes as follows:

\begin{lemma}\label{lem:universallyd}
	 Let $G$ be a finite group.
	\begin{itemize}
		\item The Fitting group $\Fit(G)$ is atomically universally definable.
		\item If $N \leq H\leq G$ and $N$ is normal in $G$ and $H/N$ is atomically universally definable in $G/N$ and $N$ is atomically universally definable in $G$, then $H$ is atomically universally definable in $G$.
		\item All terms $\cU_iG$ of the upper Fitting series are atomically universally definable.
		\item If $H\leq G$ is inducible, then the centralizer $C_G(H) = \set{g\in G}{gh = hg \text{ for all } h \in H}$ is atomically universally definable.
	\end{itemize}
\end{lemma}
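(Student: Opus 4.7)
The plan is to prove the four bullets in turn, in each case exhibiting a single equation that atomically universally defines the claimed subgroup. All four constructions rely only on definitions from \cref{sec:prelims} together with \cref{lem:setcommutator}; the substantive content sits in the first and fourth bullet, while the middle two are essentially formal manipulations of substitution.

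For the Fitting subgroup, I would use the classical characterization that $g\in\Fit(G)$ iff the normal closure $\gen{g^G}$ is nilpotent (the ``if'' direction because $\gen{g^G}$ is a nilpotent normal subgroup, the ``only if'' direction because $\gen{g^G}\leq\Fit(G)$ which is nilpotent). Since $G$ is finite, nilpotency of $\gen{g^G}$ is equivalent to $\smcomm{M}{\gen{g^G}}=1$ for some constant $M=M(G)$ depending only on $\abs{G}$. The set $g^G$ is $G$-invariant, so \cref{lem:setcommutator} gives $\smcomm{M}{\gen{g^G}}=\gen{\smcomm{M}{g^G}_{\Set}}$, and this is trivial iff every left-normed commutator $[g^{y_1},g^{y_2},\dots,g^{y_M}]$ with $y_i\in G$ equals $1$. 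Hence
\[
\alpha(X,Y_1,\dots,Y_M)\;:=\;[X^{Y_1},X^{Y_2},\dots,X^{Y_M}]
\]
atomically universally defines $\Fit(G)$.

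The composition bullet is a direct substitution argument. Let $\beta(X,Y_1,Y_2,\dots)$ atomically universally define $N$ in $G$ and $\gamma(X,Z_1,Z_2,\dots)$ atomically universally define $H/N$ in $G/N$, the three variable sets chosen pairwise disjoint. Then $g\in H$ iff $gN\in H/N$ iff for every assignment to the $Z_j$ the value $\gamma(g,Z_1,\dots)$ lies in $N$, which by the defining property of $\beta$ happens iff $\beta(\gamma(g,Z_1,\dots),Y_1,\dots)=1$ for every assignment to the $Y_i$ as well. Therefore $\alpha:=\beta(\gamma(X,Z_1,\dots),Y_1,Y_2,\dots)$ atomically universally defines $H$. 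The upper Fitting series then follows by induction on $i$: the base case $\cU_0G=1$ is atomically universally defined by the equation $X$ itself, and the inductive step applies the first bullet inside $G/\cU_iG$ and combines it with the composition just proved, using that $\cU_i G$ is normal in $G$.

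For the centralizer, let $\beta(Y_1,Y_2,\dots)$ induce $H$. Since the assignments of $\beta$ range exactly over $H$, an element $g$ lies in $C_G(H)$ iff $[g,\sigma(\beta)]=1$ for every assignment $\sigma$. Consequently
\[
\alpha(X,Y_1,Y_2,\dots)\;:=\;[X,\beta(Y_1,Y_2,\dots)]
\]
atomically universally defines $C_G(H)$. The only real care needed throughout is the bookkeeping of disjoint variable sets under substitution and fixing the uniform constant $M(G)$ provided by finiteness of $G$; no deeper obstacle arises, so I would expect the actual write-up to be short.
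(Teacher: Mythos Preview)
Your proposal is correct and matches the paper's proof essentially line for line: the same expression $[X^{Y_1},\dots,X^{Y_M}]$ for $\Fit(G)$ via \cref{lem:setcommutator} and the nilpotency of $\gen{g^G}$, the same substitution $\alpha(\beta(X,\dots),\dots)$ for the composition step, the same induction for the upper Fitting series, and the same expression $[X,\beta]$ for the centralizer. The only cosmetic difference is that you spell out the base case $\cU_0G=1$ explicitly, which the paper leaves implicit.
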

\begin{proof}
	By \cref{lem:Fitting}, the normal subgroup $\gen{g^G}$ generated by $g \in G$ is nilpotent if and only if $g \in \Fit(G)$. Therefore, $g \in \Fit(G)$ if and only if $\smcomm{M}{\gen{g^G}} = 1$ ($M$ as in \cref{sec:prelims} large enough), which, by \cref{lem:setcommutator}, is the case if and only if $\smcomm{M}{g^G}_{\Set} = 1$. Hence, the expression $[X^{Y_1}, \dots, X^{Y_{\!M}}]$ atomically universally defines $\Fit(G)$.

	Now, suppose that $\beta \in (G \cup \cX_\beta \cup \cX_\beta^{-1})^*$ with $\cX_\beta = \oneset{X,Y_1, \dots, Y_k}$ atomically universally defines $H/N$ in $G/N$ and that  $\alpha \in (G \cup \cX_\alpha \cup \cX_\alpha^{-1})^*$ with $\cX_\alpha = \oneset{Z,Y_{k+1}, \dots, Y_m}$ atomically universally defines $N$ in $G$. 
	Thus, $g \in H$ if and only if $\beta(g, Y_1, \dots, Y_k) \in N$ for all $Y_1, \dots, Y_k \in G$  and $h \in N$ if and only if $\alpha(h, Y_{k+1}, \dots, Y_m) =1$ for all $Y_{k+1}, \dots, Y_m \in G$. Hence, $\alpha(\beta(g, Y_1, \dots, Y_k), Y_{k+1}, \dots, Y_m) =1$ for all $Y_1, \dots, Y_m \in G$ if and only if $g \in H$ and so $H$ is atomically universally definable.

	The third point follows by induction from the first and second point.
	The fourth point is essentially due to \cite[Lemma 10]{HorvathS11}: 
	if $\beta$ is an expression inducing $H$, then $[X,\beta]$ atomically universally defines $C_G(H)$.
\end{proof}

\begin{lemma}\label{lem:univTAUT}
	Let $H \Nleq G$ be an atomically universally definable normal subgroup. Then \[\EQNID(G/H) \leq_{\mathrm{m}}^{\Ac0} \EQNID(G).\]	
\end{lemma}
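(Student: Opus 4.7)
The plan is a direct analogue of the reduction sketched after Lemma~\ref{lem:inducibleEQN}, now using the atomic universal definition of $H$ to ``test membership'' in $H$ from the outside of an equation instead of ``enforcing membership'' in an inducible subgroup from the inside.

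Fix, once and for all, an expression $\beta(Z,Y_1,\dots,Y_m)\in (G\cup\cX\cup\cX^{-1})^*$ that atomically universally defines $H$ in $G$; this exists by hypothesis and depends only on $G$ and $H$ (in particular, $\beta$ has constant length). Also fix, once and for all, a system of coset representatives $r\colon G/H\to G$, which again depends only on $G$ and $H$. Given an input equation $\alpha(X_1,\dots,X_n)$ over $G/H$, I would first form $\tilde\alpha(X_1,\dots,X_n)\in (G\cup\cX\cup\cX^{-1})^*$ by replacing every constant $\bar c\in G/H$ that appears in $\alpha$ by its chosen lift $r(\bar c)\in G$; the variables are left untouched. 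Then, using fresh variables $Y_1,\dots,Y_m$, the reduction outputs
\[
\alpha'\;:=\;\beta\bigl(\tilde\alpha(X_1,\dots,X_n),\,Y_1,\dots,Y_m\bigr),
\]
viewed as an equation over $G$.

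For correctness, observe that for any assignment $\sigma\colon\{X_1,\dots,X_n\}\to G$, the value $\sigma(\tilde\alpha)\in G$ reduces modulo $H$ to $\bar\sigma(\alpha)\in G/H$, where $\bar\sigma$ is $\sigma$ composed with the projection $G\to G/H$; and conversely every assignment of the $X_i$ in $G/H$ arises this way. Hence $\alpha$ is an identity of $G/H$ if and only if $\sigma(\tilde\alpha)\in H$ for every assignment $\sigma\colon\{X_1,\dots,X_n\}\to G$. By the defining property of $\beta$, this last condition is equivalent to
\[
(\sigma\cup[Z\mapsto \sigma(\tilde\alpha)]\cup\tau)(\beta)=1\quad\text{for all }\sigma,\tau,
\]
which is exactly the statement that $\alpha'$ is an identity over $G$. (The variables $X_i$ and $Y_j$ are disjoint by construction, so universal quantification over assignments of $\alpha'$ factors through the universal quantification over $\sigma$ and $\tau$.)

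It remains to check that the reduction is in $\Ac0$. Since $G$, $H$, $\beta$ and $r$ are fixed, computing $\tilde\alpha$ from $\alpha$ amounts to a symbolwise replacement of constants by a fixed lookup table, which is trivially $\Ac0$-computable (using the identity letter for padding as noted in the preliminaries). Plugging $\tilde\alpha$ into $\beta$ produces at most constantly many copies of $\tilde\alpha$ interleaved with the constant-size skeleton of $\beta$, which is again $\Ac0$. The one point to be a little careful about is that the number of occurrences of $Z$ in $\beta$ is a constant (depending only on $G$), so the output length is linear in $|\alpha|$; this is what makes the output layout expressible by a constant-depth circuit rather than merely in logspace. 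I do not expect a real obstacle here; the main conceptual step is simply to recognise that ``atomically universally definable'' is exactly the property needed to turn a membership-in-$H$ test on the value of $\tilde\alpha$ into a single equation over $G$.
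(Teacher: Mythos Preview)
Your proposal is correct and follows essentially the same approach as the paper: lift constants along a fixed section $r\colon G/H\to G$ to obtain $\tilde\alpha$, then output $\beta(\tilde\alpha,Y_1,\dots,Y_m)$ and argue that $\alpha$ is an identity in $G/H$ iff $\tilde\alpha$ always lands in $H$ iff $\beta(\tilde\alpha,\vec Y)$ is an identity in $G$. The paper's proof is slightly terser (it does not spell out the $\Ac0$ details you added), but the construction and correctness argument are identical.
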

\begin{proof}
	Denote $Q= G/H$.
	Let $\beta \in (G \cup \cX_\beta \cup \cX_\beta^{-1})^*$ with $\cX_\beta = \oneset{Z,Y_1, \dots, Y_k}$ atomically universally define $H$ and let $\alpha \in (Q \cup \cX \cup \cX^{-1})^*$ be an instance for $\EQNID(Q)$ (with $\cX \cap \cX_\beta = \emptyset$). Let $\tilde \alpha$ denote the expression obtained from $\alpha$ by replacing every constant of $Q$ by an arbitrary preimage in $G$. 
	Then $\sigma(\alpha) = 1$ in $Q$ for all assignments $\sigma: \cX \to Q$ if and only if  $\tilde \sigma(\tilde \alpha) \in H$ for all assignments $\tilde\sigma: \cX \to G$. By the choice of $\beta$, the latter is the case if and only if $\hat \sigma(\beta(\tilde \alpha,Y_1, \dots, Y_k)) =1$ for all assignments $\hat\sigma: \cX \cup \oneset{Y_1, \dots, Y_k} \to G$.
\end{proof}

\section{$G$-programs and AND-weakness}\label{sec:programs}

Let $G$ be a finite group.	
An $n$-input $G$-program of length $\ell$ with variables (input bits) from $\oneset{B_1, \dots, B_n}$ is a sequence
\[
P = \langle B_{i_1},a_1,b_1\rangle \langle B_{i_2},a_2,b_2\rangle \cdots \langle B_{i_\ell},a_\ell,b_\ell\rangle \in (\oneset{B_1, \dots, B_n} \times G \times G)^*.
\]
For a mapping $\sigma : \oneset{B_1, \dots, B_n} \to \{0,1\}$ (called an assignment) we define
$\sigma(P) \in G$ as the group element $c_1 c_2 \cdots c_\ell$, where $c_j = a_j$ if $B_{i_j} = 0$ and 
$c_j = b_j$ if $B_{i_j} = 1$ for all $1 \leq j \leq \ell$. 
We say that an $n$-input $G$-program $P$ \emph{computes} a function $f: \{0,1\}^n \to \{0,1\}$ if $P$ is over the variables $B_1, \dots, B_n$ and there is some $S\sse G$ such that $\sigma(P) \in S$ if and only if $f(\sigma) = 1$. 

\ProgSAT is the following problem: given a $G$-program $P$ with variables $B_1, \dots, B_n$, decide whether there is an assignment $\sigma:\oneset{B_1, \dots, B_n} \to G$ such that $\sigma(P)=1$.

\subparagraph*{The \AND-weakness conjecture.}
In \cite{BarringtonST90}, Barrington, Straubing and Thérien conjectured that, if $G$ is finite and solvable, every $G$-program computing the $n$-input \AND requires length exponential in $n$. This is called the \emph{\AND-weakness conjecture}.

Unfortunately, the term ``exponential'' seems to be a source of a possible misunderstanding: while often it means $2^{\Omega(n)}$, in other occasions it is used for $2^{n^{\Omega(1)}}$. 
Indeed, in \cite{GoldmannR02,BarringtonMMTT00}, the conjecture is restated as its \emph{strong version}: ``every $G$-program over a solvable group $G$ for the $n$-input \AND requires length $2^{\Omega(n)}$.'' However, already in the earlier paper \cite{BarringtonBR94}, it is  remarked that the $n$-input \AND can be computed by depth-$k$ \CC circuits of size $2^{\Oh(n^{1/(k-1)})}$ for every $k\geq 2$ (a \CC circuit is a circuit consisting only of $\MOD{m}$ gates for some $m \in \N$)~-- thus, disproving the strong version of the \AND-weakness conjecture.
For a recent discussion about the topic also referencing the cases where the conjecture actually is proved, we refer to \cite{Kompatscher19CC}.

In this section we provide a more detailed upper bound on the length of $G$-programs for the \AND function in terms of the Fitting length of $G$. We can view our upper bound as a refined version of the $2^{\Oh(n^{1/(k-1)})}$ upper bound for depth-$k$ \CC circuits. This is because, by \cite[Theorem 2.8]{McKenziePT91}, for every depth-$k$ \CC circuit family there is a fixed group $G$ of Fitting length $k$ (indeed, of derived length $k$) such that the $n$-input circuit can be transformed into a
 $G$-program of length polynomial in $n$. 

\iffull
The easiest variant to disprove the strong version of the \AND-weakness conjecture is a divide-and-conquer approach: Assume we can compute the $n$-input \AND by a \CC-circuit of size $2^n$ and depth $2$ (which is true by \cite{Barrington85}). Since we can decompose the $n$-input \AND as $\sqrt{n}$-input \AND of $\sqrt{n}$ many $\sqrt{n}$-input \AND{}s, we obtain a \CC circuit of depth $4$ and size roughly $2^{\sqrt{n}}$~-- or, more generally, a \CC circuit of depth $2k$ and size roughly $2^{\sqrt[k]{n}}$. The proof of \cref{prop:notANDweakrefined} uses a similar divide-and-conquer approach:
\fi

\begin{proposition}\label{prop:notANDweakrefined} Let $G$ be a finite solvable group and consider a strictly ascending series $1= H_0 \Nle H_1 \Nle \cdots \Nle H_m = G$ of normal subgroups where $H_i = \gamma_{k_i}(H_{i+1})$ with $k_i \in \N \cup\oneset{\infty}$ for $i \in \interval{1}{m-1}$ and $k_0 =\infty$. Denote $c = \abs{\set{i \in \interval{1}{m-1}}{k_i = \infty}}$ and $C = \prod_{k_i < \infty} (k_i + 1)$.
	
	Then the $n$-input \AND function can be computed by a $G$-program of length $\Oh(2^{D n^{1/c}})$ where $D = \frac{c}{C^{1/c}}$. More precisely, for every $n \in \N$ there is some $1\neq g \in G$ and a $G$-program $Q_n$ of length $\Oh(2^{D n^{1/c}})$ such that 
	\[\sigma(Q_n) = \begin{cases}
	g&\text{if } \sigma(B_1) = \cdots = \sigma(B_n) = 1,\\
	1& \text{otherwise.}
	\end{cases}\]
\end{proposition}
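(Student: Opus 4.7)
The plan is to construct $Q_n$ as a nested commutator tree of depth $m-1$ whose branching mirrors the series $1 = H_0 \Nle \cdots \Nle H_m = G$, built level by level from the outer layer $H_{m-1} \Nle H_m$ inward. Each internal node of the tree is a commutator bracket, each leaf is a length-$1$ single-input activation $\langle B_j, 1, h_j\rangle$, and each layer $H_i \Nle H_{i+1}$ (for $i = 1,\ldots,m-1$) with parameter $k_i$ contributes a characteristic fan-out (the number of tree-leaves it absorbs) and a multiplicative length factor. The scheme is a refinement, tailored to the specific series parameters, of the classical depth/size trade-off $2^{\Oh(n^{1/(k-1)})}$ for depth-$k$ \CC{} circuits computing \AND mentioned in \cref{sec:programs}.

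At a finite layer $k_i < \infty$, where $H_i = \gamma_{k_i}(H_{i+1})$, a single iterated commutator of $k_i$ elements of $H_{i+1}$ lies in $H_i$ by \cref{lem:setcommutator}; I would use one such bracket to absorb $k_i+1$ tree-leaves (the $k_i$ bracket arguments together with the inner sub-tree plugged into one position) at a constant length overhead $\Oh(1)$. At an infinite layer $k_i = \infty$, where $H_i = \gamma_\infty H_{i+1} = [H_{i+1}, H_i]$, the stabilisation identity $\mcomm{H_{i+1}}{n_i}{H_{i+1}} = H_i$ holds for every $n_i$ beyond the threshold $M(H_{i+1})$ introduced in \cref{sec:prelims}, so I can pick a target $g_i \in H_i \setminus \{1\}$ realised as a genuine single iterated commutator $g_i = [a_1,\ldots,a_{n_i+1}]$ and encode it as a right-nested commutator bracket in $Q_n$. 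Such a bracket absorbs $n_i+1$ leaves and contributes a length factor $\Oh(2^{n_i})$, since right-nesting essentially doubles the program length per added level.

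Assembling the per-layer contributions, the tree has $C \cdot \prod_{k_i=\infty}(n_i+1)$ leaves and total length $\Oh\bigl(2^{\sum_{k_i=\infty} n_i}\bigr)$. Requiring at least $n$ leaves forces $\prod_{k_i=\infty} n_i \geq n/C$, and AM--GM minimises $\sum n_i$ subject to this constraint by setting all $c$ of the $n_i$'s equal to $(n/C)^{1/c}$, yielding $\sum n_i = c\,(n/C)^{1/c} = Dn^{1/c}$. Correctness of the \AND semantics is then immediate: if some $B_j = 0$, the corresponding leaf evaluates to $1$ and this $1$ propagates through every commutator bracket containing it; if all $B_j = 1$, each leaf evaluates to its prescribed $h_j$, and the whole tree evaluates to the non-identity element $g = g_{m-1} \in H_{m-1}\setminus\{1\} \subseteq G$.

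The hard part will be keeping the length factor at each infinite layer exactly $2^{n_i}$ rather than $(2 s_i)^{n_i}$: in general an element of $H_i = [H_{i+1}, H_i]$ is only a \emph{product} of $s_i$ commutators, and naively distributing such a product over the right-nested structure would multiply the overall length by $\prod s_i^{n_i}$ and so inflate the exponent $D$ by a factor $\log(2\max_i s_i)$. My plan is to sidestep this by choosing each $g_i$ as a genuine single iterated commutator of depth $n_i$ (which exists non-trivially in $H_i$ by the stabilisation of the lower central series of $H_{i+1}$, exploiting \cref{lem:addGincomm} to discard any occurrence of outer $G$-factors when $n_i$ exceeds $M(H_{i+1})$), so that the right-nested decomposition is genuinely single-term. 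This confines the product-of-commutators phenomenon to a constant-size initial segment of the recursion, which contributes only an $\Oh(1)$ multiplicative overhead that is absorbed into the leading constant of the final bound $\Oh\bigl(2^{Dn^{1/c}}\bigr)$.
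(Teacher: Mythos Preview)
Your overall strategy---a fully-branching nested-commutator tree whose fan-out at layer $i$ is $k_i+1$ (finite $k_i$) or roughly $(n/C)^{1/c}$ (infinite $k_i$), with single-bit leaves and the AM--GM optimisation of $\sum n_i$---is exactly the paper's approach, and your leaf-count and length estimates are correct. (One indexing slip: the root value lands in $H_1$, not $H_{m-1}$; each commutator layer takes you one step \emph{down} the series, so after all $m-1$ layers you end in $H_1$.)

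There is, however, a real gap in your treatment of the ``hard part.'' You argue that at each infinite layer $i$ some non-trivial $g_i \in H_i$ is a single iterated commutator $[a_1,\ldots,a_{n_i+1}]$ with $a_j \in H_{i+1}$. That is true---but for the recursion to work, each $a_j$ must \emph{itself} be a single iterated commutator of the prescribed shape at layer $i+1$, and a generic element of $H_{i+1}$ need not be one; choosing a target $g_i$ per layer does not glue the layers together, and \cref{lem:addGincomm} does not help here. The paper resolves this by running the construction top-down on \emph{sets}: put $A_m = G$ and $A_i = \smcomm{K}{A_{i+1}}_{\Set}$ (resp.\ $\smcomm{k_i+1}{A_{i+1}}_{\Set}$), so that every element of $A_i$ is \emph{by definition} a single iterated commutator of elements of $A_{i+1}$. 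Each $A_i$ is closed under $G$-conjugation, so \cref{lem:setcommutator} applied inductively gives $\langle A_i\rangle = H_i$; hence $A_1 \neq \{1\}$, and any non-trivial $g \in A_1$ automatically carries the full recursive single-commutator decomposition $g_v = [g_{v1},\ldots,g_{vK}]$ all the way to the leaves, with no products of commutators needed at any level. This top-down definition of the sets $A_i$ is precisely the missing link in your argument.
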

Clearly we have $c \leq d-1$ if $d$ is the Fitting length of $G$. The lower Fitting series is the special example of such a series where $H_i = \cL_{d-i}G$ and $k_i = \infty$ for all $i \in \oneset{0, \dots, d}$. Thus, we get the following corollary:

\begin{corollary}\label{cor:notANDweak}
	Let $G$ be a finite solvable group of Fitting length $d \geq 2$. Then the $n$-input \AND function can be computed by a $G$-program of length $2^{\Oh(n^{1/(d-1)})}$. 
\end{corollary}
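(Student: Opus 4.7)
The plan is to specialize Proposition~\ref{prop:notANDweakrefined} to the lower Fitting series of $G$. First I would take the canonical chain $1 = \cL_d G \Nle \cL_{d-1}G \Nle \cdots \Nle \cL_0 G = G$, which is strictly ascending since $\FitL(G) = d$, and reindex it as $H_i := \cL_{d-i}G$ for $i \in \{0,\dots,d\}$ so that $H_0 = 1$, $H_m = G$ with $m = d$, matching the format required by the proposition.

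Next I would check the hypotheses of the proposition. The defining relation $\cL_{j+1}G = \gamma_\infty(\cL_j G)$ translates under the reindexing to $H_i = \gamma_\infty(H_{i+1})$ for every $i$, so all parameters are $k_i = \infty$. Then the constants appearing in the proposition work out as follows: the set $\{i \in [1, m-1] : k_i = \infty\}$ equals $\{1,\dots,d-1\}$, giving $c = d-1$, while the product $C = \prod_{k_i < \infty}(k_i+1)$ is the empty product and hence $C = 1$, so $D = c/C^{1/c} = d-1$. Plugging into the length bound $\Oh(2^{D n^{1/c}})$ supplied by the proposition yields a $G$-program of length $\Oh\bigl(2^{(d-1)\, n^{1/(d-1)}}\bigr) = 2^{\Oh(n^{1/(d-1)})}$. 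Its output behaviour (a fixed non-identity $g$ on the all-ones input and the identity otherwise) is exactly what is required for computing the $n$-input \AND in the $G$-program sense, so no further work is needed.

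I do not anticipate any real obstacle: Proposition~\ref{prop:notANDweakrefined} is tailored so that the corollary drops out by choosing this canonical series. The only delicate point is the index bookkeeping~-- $k_0 = \infty$ holds by convention but $i = 0$ is excluded from the set defining $c$~-- which is precisely what produces the exponent $1/(d-1)$ rather than $1/d$ and, for example, recovers the expected $2^{\Oh(\sqrt n)}$ bound in the Fitting-length-three case mentioned in the introduction.
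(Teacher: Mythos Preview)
Your proposal is correct and follows exactly the approach the paper takes: the paper derives the corollary simply by specializing Proposition~\ref{prop:notANDweakrefined} to the lower Fitting series with $H_i = \cL_{d-i}G$ and $k_i = \infty$ for all $i$, and your proof carries out precisely this specialization with the constants $c$, $C$, $D$ checked explicitly.
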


\begin{example}
The symmetric group on four elements $S_4$ has Fitting length 3 with $S_4  \geq A_4 \geq C_2 \times C_2 \geq 1$ being both the upper and lower Fitting series. Therefore, we obtain a length-$\Oh(2^{2\sqrt{n})})$ program for the $n$-input \AND by \cref{prop:notANDweakrefined}. In particular, the strong version of the \AND-weakness conjecture does not hold for the group $S_4$. Note that according to \cite{BarringtonST90}, $S_4$ is the smallest group for which the $2^{\Omega(n)}$ lower bound from \cite{BarringtonST90} does not apply.

 On the other hand, consider the group $G = (C_3 \times C_3)\rtimes D_4$ where $D_4$ (the dihedral group of order eight) acts faithfully on $C_3 \times C_3$\footnote{This group can be found in the GAP small group library under the index $[72,40]$. It has been suggested as an example by Barrington (private communication).}. It has Fitting length two. Moreover, its derived subgroup $G' = (C_3 \times C_3)\rtimes C_2$ still has Fitting length two. Hence, we have a series $H_3 =G $, $H_2 = G' = \gamma_1 G$, $\,H_1 = \gamma_\infty G' = C_3 \times C_3$, and $H_0 = 1$. Therefore, we get an upper bound of $\Oh(2^{n/2})$ for the length of a program for the $n$-input \AND.
\end{example}

\begin{proof}[Proof of \cref{prop:notANDweakrefined}]
	We choose $K = (n/C)^{1/c}$. For simplicity, let us first assume that $K$ is an integer. Moreover, we  assume that $K$ is large enough such that $H_{i} =  \smcomm{K}{ H_{i+1}}$ holds whenever $k_i = \infty$ and that $K \geq k_i +1$ for all $k_i < \infty$.
	
	We define sets $A_i \sse G$ inductively by $A_{m} = G$ and $A_{i} = \smcomm{K}{A_{i+1}}_{\Set}$ if $k_i = \infty$ and $A_{i} = \smcomm{k_i+1}{A_{i+1}}_{\Set}$ if $k_i < \infty$.
	By \cref{lem:setcommutator} and induction it follows that $H_i = \gen{A_{i}}$ for all $i\in{0, \dots, m}$. Since $H_1 \neq 1$, we find a non-trivial element $g  \in A_{1}$. We can decompose $g $ recursively. For this, we need some more notation: for $\ell \in \interval{1}{m}$ consider the set of words \[V_\ell = \set{v = v_1\cdots v_{\ell-1} \in \interval{1}{K}^{\ell-1}}{v_i \leq k_{i} + 1 \text{ for all } i \in \interval{1}{\ell-1}}.\]
	We have $\abs{V_{m}} = C \cdot K^{c} = n$, so we can fix a bijection $\kappa\colon V_{m} \to \interval{1}{n}$. 
	
	Now, we can describe the recursive decomposition of $g = g_\epsilon$:
	\begin{itemize}
		\item $g_{v} = [g_{v1}, \dots, g_{vK}]$	for $v \in V_\ell$ with $k_\ell = \infty$, and
		\item $g_{v} = [g_{v1}, \dots, g_{v(k_\ell +1)}]$ for  $v \in V_\ell$ with $k_\ell < \infty$.
	\end{itemize}
This, in particular, we can view $g_\epsilon$ as a word over the $g_v$ for $v \in V_{m}$.
	
	For $v \in V_\ell$ we have $\abs{g_v} \leq \sum_{i=1}^{K} 2^{K+1-i}\abs{g_{vi}} \leq 2^{K+1} \max_{i }\abs{g_{vi}}$ whenever $k_\ell = \infty$ and $\abs{g_v} \leq 2^{k_\ell+2} \max_{i }\abs{g_{vi}}$ if $k_\ell < \infty$. Therefore, setting  $D = \frac{c}{C^{1/c}}$ we obtain by induction \[\abs{g_\eps} \leq 2^{\sum_{k_\ell < \infty} (k_\ell+2)} (2^{K+1})^{c} \in \Oh( 2^{Dn^{1/c}}).\]
	
	In order to obtain a $G$-program for the $n$-input \AND, we define $G$-programs $P_v$ for $v \in \bigcup_{\ell \leq m} V_\ell$. In the commutators we need also programs for inverses: for a $G$-program $P = \langle B_{i_1},a_1,b_1\rangle \langle B_{i_2},a_2,b_2\rangle \cdots \langle B_{i_\ell},a_\ell,b_\ell\rangle$  we set $P^{-1} = \langle B_{i_\ell},a_\ell^{-1},b_\ell^{-1}\rangle \cdots \langle B_{i_1},a_1^{-1},b_1^{-1}\rangle $. Clearly $(\sigma(P))^{-1} = \sigma(P^{-1})$ for all assignments $\sigma$.
	\begin{itemize}
		\item for  $v \in V_m$ we set  $P_v = \langle B_{\kappa(v)}, 1,g_v\rangle$,
		\item for  $v \in V_\ell $ with $1 \leq \ell < m$ we set $P_{v} = [P_{v1}, \dots, P_{vK}]$ if $k_{\ell} = \infty$, and
		\item for  $v \in V_\ell $ with $1 \leq \ell < m$ we set $P_{v} = [P_{v1}, \dots, P_{v(k_\ell +1)}]$ if $k_\ell < \infty$.
	\end{itemize}
	
	For $v \in V_\ell$ let  $V(v)$ denote the set of those words $w \in V_{m}$ having $v$ as a prefix.
	By induction we see that 
	\[\sigma(P_v) = \begin{cases}
	g_v&\text{if } \sigma(B_{\kappa(w)}) = 1 \text{ for all } w\in V(v),\\
	1& \text{otherwise.}
	\end{cases}\]
	This shows the correctness of our construction.
	
	It remains to consider the case that $(n/C)^{1/c}$ is not an integer. Then we set $K = \ceil{(n/C)^{1/c}}$. It follows that $\abs{V_{m}} = C \cdot K^{c} \geq n$, so we can fix a bijection $\kappa \colon U\to \interval{1}{n}$ for some subset $U \sse  V_{m}$. We still have $\abs{g_\eps} \leq 2^{\sum_{k_i < \infty} (k_i+1)} (2^{K+1})^{c} \in \Oh (2^{cK}) = \Oh( 2^{Dn^{1/c}})$ with $D$ as above.	
	This concludes the proof of \cref{prop:notANDweakrefined}.
\end{proof}

\begin{remark}
In the light of \cref{prop:notANDweakrefined} it is natural to ask for a refined version of the \AND-weakness conjecture. A natural candidate would be to conjecture that every $G$-program for the $n$-input \AND has length $2^{\Omega(n^{1/(d-1)})}$ where $d$ is the Fitting length of $G$.

However, this also weaker version of the \AND-weakness conjecture is wrong! 
 Indeed, in \cite[Section 2.4]{BarringtonBR94} Barrington, Beigel and Rudich show that the $n$-input \AND can be computed by circuits using only $\MOD{m}$ gates of depth 3 and size $2^{\Oh(n^{1/r} \log n)}$ where $r$ is the number of different prime factors of $m$. Translating the circuit into a $G$-program yields a group $G$ of Fitting length 3. Since there is no bound on $r$, we see that there is no lower bound on the exponent $\delta$ such that there are $G$-programs of length $2^{\Oh(n^\delta)}$ for the $n$-input \AND in groups of Fitting length 3. 
\iffull
While this does not yield smaller \CC circuits or shorter $G$-programs than the approach of \cref{prop:notANDweakrefined} allows, it shows that the divide-and-conquer technique on which \cref{prop:notANDweakrefined} relies is not always the best way for constructing small programs for \AND. 
\fi
\end{remark} 

In \cite{HansenK10} it is shown that the \AND function can be computed by probabilistic \CC circuits using only a logarithmic number of random bits, which ``may be viewed as evidence contrary to the conjecture'' \cite{HansenK10}.
 In the light of this, we do not feel confident to judge which form of the \AND-weakness conjecture might be true. The following version seems possible.

\begin{conjecture}[\AND-weakness \cite{BarringtonST90}]\label{conj:andweak}
	Let $G$ be finite solvable. Then every $G$-program for the $n$-input \AND has length $2^{n^{\Omega(1)}}$.
\end{conjecture}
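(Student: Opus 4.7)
The plan is to attack the conjecture by induction on the Fitting length $d$ of $G$, extending the Barrington--Straubing--Thérien $2^{\Omega(n)}$ lower bound (which handles the case $d=2$ and, more generally, the case where the normal series satisfies $c = d-1 = 1$ in the sense of \cref{prop:notANDweakrefined}) to larger $d$ with appropriately weaker exponents. The overall target would be to show that every $G$-program $P$ of length $\ell$ computing the $n$-input \AND satisfies $\log \ell = \Omega(n^{\delta})$ for some $\delta = \delta(G) > 0$, matching the shape of the upper bound in \cref{cor:notANDweak}.

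First I would normalize the situation: any $G$-program computing \AND may be assumed ``balanced'', in the sense that each input variable $B_i$ appears roughly $\ell/n$ times; otherwise, restricting the most frequent variable yields a shorter program on $n-1$ inputs that still computes a nontrivial restriction of \AND, and a straightforward potential-function argument closes the induction. Next, I would exploit the correspondence between $G$-programs and \CC circuits of \cite{McKenziePT91}, together with the inducible-subgroup toolkit of \cref{lem:inducible}: peel off the last term $\cL_{d-1}G$ of the lower Fitting series, which is inducible and normal, and show that any short $G$-program for \AND projects to a $(G/\cL_{d-1}G)$-program of length $\ell$ computing a related Boolean function. The inductive hypothesis on $G/\cL_{d-1}G$ (of Fitting length $d-1$) should then give a lower bound on $\ell$, provided the projected function is ``hard enough''.

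The heart of the argument, and the genuinely hard step, would be controlling the loss incurred by this projection. Naively one gets only a polynomial savings in $\ell$ per layer of the Fitting series, which is far too weak. A plausible tool is Fourier/representation-theoretic analysis: for each irreducible character $\chi$ of $G$, study $\sum_{\sigma} \chi(\sigma(P))$, where $\sigma$ ranges over a carefully chosen distribution on assignments, and bound it via the degree of $\chi$ together with a ``mixing'' lemma inside $\cL_{d-1}G$. An alternative approach is through multiparty number-on-the-forehead communication complexity, following the Babai--Nisan--Szegedy paradigm adapted to solvable groups.

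The main obstacle is the one that has blocked progress for three decades. The Barrington--Beigel--Rudich construction, and the refined probabilistic circuits of \cite{HansenK10}, show that even depth-$3$ \MOD-circuits can compute \AND in size $2^{\Oh(n^{1/r}\log n)}$, where $r$ grows with the number of distinct prime factors of the moduli in play. Any successful lower bound must therefore exploit a \emph{global} property of the group $G$ that is invisible to a layer-by-layer analysis, since no fixed solvable $G$ has access to arbitrarily many prime moduli. Isolating such a global invariant, and turning it into a quantitative obstruction to short programs, is the point where the present proposal necessarily stops being a proof sketch and becomes a research programme: the conjecture remains open, and the above should be read as a plausible line of attack rather than a completed argument.
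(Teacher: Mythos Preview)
The statement you were asked to prove is \cref{conj:andweak}, which the paper explicitly labels a \emph{conjecture}. The paper does not prove it; on the contrary, the surrounding discussion (the remark on \cite{BarringtonBR94} and \cite{HansenK10}) makes clear that the authors regard even this weakened form of the \AND-weakness conjecture as open and are merely recording it as a plausible formulation. There is therefore no ``paper's own proof'' to compare against, and your final paragraph is the honest one: what you have written is a research programme, not a proof.

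On the substance of the programme, two points deserve flagging. First, the balancing step is weaker than you suggest: restricting the most frequent variable $B_i$ to a constant lets you absorb each $\langle B_i,a,b\rangle$ into an adjacent instruction, shortening the program by the number of occurrences of $B_i$, i.e.\ by at most a $1/n$ fraction. Iterating this gives only a linear lower bound, not the $2^{n^{\Omega(1)}}$ target, so balancing cannot carry any real weight in the argument. Second, the projection $G \to G/\cL_{d-1}G$ does not in general take a program computing \AND to a program computing anything hard: the accepting set $S \subseteq G$ projects to some subset of $G/\cL_{d-1}G$, and if (as in the constructions of \cref{prop:notANDweakrefined}) the nontrivial output lies inside $\cL_{d-1}G$, the projected program computes the constant function. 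This is precisely why layer-by-layer induction on the Fitting series has not yielded the conjecture, and your own final paragraph correctly identifies the need for a genuinely global invariant. Until such an invariant is found, the conjecture stands.
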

Notice that \cite[Theorem 2]{BarringtonMMTT00} (if $G$ is \AND-weak, \ProgSAT over $G$ can be decided in quasi-polynomial time) still holds with this version of the \AND-weakness conjecture.

\section{Reducing \KColoring{C} to equations}\label{sec:reduction}

In this section we describe the reduction of \KColoring{C} to $\EQNSAT(G)$ and $\EQNID(G)$ in the spirit of \cite{GoldmannR02,Kompatscher19}. For this, we rely on the fact that $G$ has some normal subgroups meeting some special requirements. In \cref{sec:consequences}, we show that all sufficiently complicated finite solvable groups meet the requirements of \cref{thm:main}.

For a normal subgroup $H \Nleq G$ and $g\in G$, we define
$\eta_g(H) = \fitcomm{H}{g^G}$. Recall that $M $ is chosen large enough such that $\mcomm{X}{M}{Y} = \mcomm{X}{i}{Y}$ for all $i \geq M$ and all $X,Y\sse G$ with $X^G=X$ and $Y^G=Y$.
Since $H$ is normal, we have $\eta_g(H)\leq H$ and $\eta_g(H)$ is normal in $G$.

\begin{lemma}\label{lem:repeateta}\label{lem:Hsubgroup}
	Let $H\Nleq G$ be a normal subgroup and $g,h \in G$. Then 
	\begin{enumerate}
		\item $\eta_g(\eta_g(H)) = \eta_g(H)$, and
		\item $\eta_{gh}(H) \leq  \eta_g(H)\eta_h(H)$, and
		\item $\FitL(\eta_{gh}(H)) \leq  \max\oneset{\FitL(\eta_g(H)), \FitL(\eta_h(H))}$.
	\end{enumerate}
\end{lemma}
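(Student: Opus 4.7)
The plan is to unfold (1) directly using the stability of $M$, to prove (2) via a pigeonhole argument on iterated commutators over $U = \gen{g^G}$ and $V = \gen{h^G}$, and to derive (3) from (2) together with \cref{lem:Fitting}.

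For (1), the left-normed convention $\mcomm{A}{k}{Y} = [\dots[[A,Y],Y]\dots,Y]$ yields $\eta_g(\eta_g(H)) = \mcomm{\mcomm{H}{M}{g^G}}{M}{g^G} = \mcomm{H}{2M}{g^G}$, and because $H^G = H$ and $(g^G)^G = g^G$ the stability defining $M$ gives $\mcomm{H}{2M}{g^G} = \mcomm{H}{M}{g^G} = \eta_g(H)$.

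For (2), set $U = \gen{g^G}$, $V = \gen{h^G}$, $K = \eta_g(H)$ and $L = \eta_h(H)$; by \cref{lem:setcommutator} we have $K = \mcomm{H}{M}{U}$ and $L = \mcomm{H}{M}{V}$. Since $(gh)^x = g^x h^x$, we have $(gh)^G \sse UV$, and by monotonicity plus \cref{lem:setcommutator} it suffices to prove $\mcomm{H}{M}{UV} \leq KL$. The commutator identity $[x,uv] = [x,v][x,u]^v$ and the normality of $[N, U]$ and $[N, V]$ yield $[N, UV] \leq [N, U][N, V]$ for every normal $N \leq H$; iterating this produces
\[
\mcomm{H}{k}{UV} \;\leq\; \prod_{\vec z \in \oneset{U,V}^k} H_{\vec z}, \qquad H_{\vec z} := [H, z_1, z_2, \dots, z_k].
\]
Each $H_{\vec z}$ is normal in $G$ by induction on $k$, and because $[N, W] \leq N$ for every normal $N$ and $W \in \oneset{U,V}$, together with the monotonicity $[A, W] \leq [B, W]$ when $A \leq B$, a short induction on $|\vec z_2|$ gives the key slot-deletion inequality $H_{\vec z_1 W \vec z_2} \leq H_{\vec z_1 \vec z_2}$. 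Consequently $H_{\vec z} \leq \mcomm{H}{r_U(\vec z)}{U}$ and $H_{\vec z} \leq \mcomm{H}{r_V(\vec z)}{V}$, where $r_U(\vec z)$ and $r_V(\vec z)$ count the occurrences of $U$ and $V$ in $\vec z$. For $k = 2M-1$ pigeonhole forces $\max\oneset{r_U(\vec z), r_V(\vec z)} \geq M$ for every $\vec z$, so each $H_{\vec z}$ is contained in $K$ or in $L$, hence in the subgroup $KL$. Therefore $\mcomm{H}{2M-1}{UV} \leq KL$, and stability upgrades this to $\mcomm{H}{M}{UV} \leq KL$.

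Finally, (3) follows by setting $d = \max\oneset{\FitL(\eta_g(H)), \FitL(\eta_h(H))}$: \cref{lem:Fitting}(i) puts both $\eta_g(H)$ and $\eta_h(H)$ inside $\cU_d G$, hence so does their product, and by (2) also $\eta_{gh}(H) \leq \cU_d G$, giving $\FitL(\eta_{gh}(H)) \leq d$ via $\cU_i N = \cU_i G \cap N$ applied to the normal subgroup $N = \eta_{gh}(H)$. The main obstacle is the slot-deletion inequality inside (2): recognising that the iterated inclusion $[N, W] \leq N$ propagates through all layers of the commutator hierarchy, and that this is precisely what turns the combinatorial expansion of $\mcomm{H}{k}{UV}$ into a pigeonhole-tractable bound, is the technical crux.
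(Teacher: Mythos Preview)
Your proof is correct and follows essentially the same strategy as the paper: stability of $M$ for (1), a pigeonhole argument on $2M$ (you use $2M{-}1$) commutator slots for (2), and \cref{lem:Fitting} for (3). The only difference is packaging: the paper bypasses your explicit product expansion over $\oneset{U,V}^k$ and the slot-deletion lemma by invoking \cref{lem:setcommutator} to write $[H,\kern.1em_{2M}\kern.1em\gen{g^G\cup h^G}] = \gen{[H,\kern.1em_{2M}\kern.1em g^G\cup h^G]_{\Set}}$ and then applying pigeonhole directly to the individual iterated commutators $[a,y_1,\dots,y_{2M}]$ with $y_i\in g^G\cup h^G$~--- your slot-deletion argument is exactly what justifies the paper's terse ``thus, is in $\eta_g(H)$'' at the element level.
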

\begin{proof}
	We use the fact that $M$ is chosen such that  $\mcomm{X}{M}{Y} = \mcomm{X}{i}{Y}$ for all  $i \geq M$ and all $X,Y \sse G$ with $X^G=G$ and $Y^G=Y$:	
\begin{align*}
	\eta_g(H) &= \mcomm{H}{M}{g^G}= \mcomm{H}{2M}{g^G}=  \mcomm{\mcomm{H}{M}{g^G}}{M}{g^G\vphantom{k^k}} = \eta_g(\eta_g(H)).
\end{align*}
The second point follows with the same kind of argument:
\begin{align*}
\eta_{gh}(H) &= [H,\,_{2M} (gh)^G]\leq [H,\,_{2M} \gen{g^G \cup h^G}]\\
& = \gen{[H,\,_{2M} g^G \cup h^G]_{\Set}} \tag{by \cref{lem:setcommutator}}\\
&\leq \eta_g(H)\eta_h(H).
\end{align*}
The last step is because each of the commutators in $[H,\,_{2M} g^G \cup h^G]_{\Set}$ either contains at least $M$ terms from $g^G$ and, thus, is in $\eta_g(H)$ or it contains at least $M$ terms from $h^G$. 

The third point is an immediate consequence of the second point and \cref{lem:Fitting}.
\end{proof}

\begin{lemma}\label{lem:Kinducible}
	Suppose that $K \Nleq G$ is a normal subgroup satisfying $\eta_g(K) = K$ for some $g \in G$. Then $K$ is inducible.
\end{lemma}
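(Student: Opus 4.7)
The plan is to reduce this to part (v) of Lemma~\ref{lem:inducible}, which says that any subgroup $H \leq G$ satisfying $H = [G,H]$ is inducible. So the only thing to check is that the hypothesis $\eta_g(K) = K$ forces $K = [G,K]$.

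First I would observe the easy inclusion $[G,K] \leq K$, which holds because $K$ is normal in $G$. For the other direction, I would unfold the definition $\eta_g(K) = \fitcomm{K}{g^G} = [K,\,_{M}\,g^G]$ and use that $g^G \subseteq G$ together with the normality of $[K,G]$ in $G$ to get
\[
\eta_g(K) \;=\; [K,\,_{M}\,g^G] \;\leq\; [K,\,_{M}\,G] \;\leq\; [K,G].
\]
Combining this with the hypothesis $\eta_g(K) = K$ gives $K \leq [K,G] = [G,K] \leq K$, hence $K = [G,K]$. Then Lemma~\ref{lem:inducible}(v) applies and yields that $K$ is inducible.

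There is no real obstacle here; the main step is simply noting that pushing $g^G$ up to all of $G$ in the iterated commutator already collapses $\eta_g(K)$ into $[K,G]$, after which part (v) of Lemma~\ref{lem:inducible} does the rest.
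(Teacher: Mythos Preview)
Your proof is correct and follows essentially the same approach as the paper: show that $\eta_g(K)=K$ forces $K=[K,G]$ and then invoke Lemma~\ref{lem:inducible}(v). The paper's proof is just a one-line version of yours, asserting the implication $K=[K,G]$ without spelling out the chain $\eta_g(K)\leq [K,\,_{M}\,G]\leq [K,G]$ that you make explicit.
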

\begin{proof}
Because $\eta_{g}(K) = K$ for some $g\in G$ implies that $K = [K,G]$, it follows from \cref{lem:inducible} that $K$ is inducible.
\end{proof}

\begin{theorem}\label{thm:main}
	Let $G$ be a finite solvable group of Fitting length three and assume there are normal subgroups $K\Nleq H\Nleq  G$ such that $\FitL(K) = 2$, $\cU_{2}G \leq H$, and $\abs{G/H} \geq 3$. Moreover, assume that
	\begin{enumerate}[(I)]
		\item for all $g \in G  \setminus H$ we have $\eta_g(K) = K$,\label{assumption1}
		\item for all $h \in H$ we have $\FitL(\eta_h(K)) \leq 1$.\label{assumption2}
	\end{enumerate}
Then $\EQNSAT(G)$ and $\EQNID(G)$ cannot be decided in deterministic time $2^{o(\log^2N)}$ under ETH where $N$ is the length of the input expression. In particular, $\EQNSAT(G)$ and $\EQNID(G)$ are not in \P under ETH.
\end{theorem}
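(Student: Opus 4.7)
The plan is to reduce \KColoring{C} for $C := \abs{G/H} \geq 3$ to \EQNSAT$(G)$ and \EQNID$(G)$, producing an equation of size $N = 2^{O(\sqrt{m+n})}$ from a graph with $n$ vertices and $m$ edges. Since ETH (via the sparsification lemma) gives a $2^{\Omega(m+n)}$ lower bound on \KColoring{C}, and $\log^2 N = \Theta(m+n)$, this translates into the claimed $2^{\Omega(\log^2 N)}$ lower bound on both \EQNSAT and \EQNID. First, I introduce a variable $X_v$ ranging over $G$ for each $v \in V$ and interpret the coset $X_v H \in G/H$ as its color; a valid coloring corresponds to an assignment satisfying $\delta_e := X_u X_v^{-1} \notin H$ for every edge $e = \{u,v\} \in E$.

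The core construction is an iterated-commutator AND gadget built from (I) and (II). Assumption (I) gives $K = \eta_g(K) \leq [K,G] \leq K$ for any $g \notin H$, so $K = [K,G]$ and $K$ is inducible by \cref{lem:inducible}; fix an expression $\beta_K$ over fresh variables inducing $K$. For any subexpression $\alpha$ together with fresh auxiliaries $Z_1, \ldots, Z_M$, set
\[\Phi[\alpha] := [\beta_K,\, \alpha^{Z_1},\, \ldots,\, \alpha^{Z_M}].\]
By \cref{lem:setcommutator}, the range of $\Phi[\alpha]$ is $\eta_\alpha(K)$, which equals $K$ when $\alpha \notin H$ (by (I)) and lies in $\cU_1 G$ when $\alpha \in H$ (by (II)). Concatenating such blocks for $t$ subexpressions $\alpha_1, \ldots, \alpha_t$ (each with disjoint fresh auxiliaries) yields an expression $\Psi_t$ of size $O(4^{Mt}) = 2^{O(t)}$ whose range is all of $K$ when every $\alpha_i \notin H$, and is confined to $\cU_1 G$ whenever some $\alpha_j \in H$: once a commutator step lands in the normal subgroup $\cU_1 G$, every subsequent commutator stays there.

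To achieve total size $2^{O(\sqrt m)}$, I follow the two-level divide-and-conquer scheme underlying \cref{prop:notANDweakrefined}. Partition $E$ into $\sqrt m$ blocks of $\sqrt m$ edges each; for each block $B_b$, form the inner gadget $A_b := \Psi_{\sqrt m}[\delta_e : e \in B_b]$ of size $2^{O(\sqrt m)}$. By the above, for any assignment of the $X_v$'s that makes $B_b$ ``good'' (every edge in $B_b$ has $\delta_e \notin H$), the inner auxiliaries allow $A_b$ to take any value in $K$; otherwise $A_b$ is pinned to $\cU_1 G$. Fix $k_0 \in K \setminus \cU_1 G$, which exists since $\FitL(K) = 2$. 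For \EQNSAT, the final equation aggregates the $\sqrt m$ per-block conditions ``$A_b$ attains $k_0$'' into a single expression of size $2^{O(\sqrt m)}$ that is satisfiable iff every block can be simultaneously made good by some $X_v$-assignment, i.e.\ iff a valid coloring exists. For \EQNID, one dualises using the atomic universal definition of $\cU_1 G = \Fit(G)$ given in \cref{lem:universallyd} (the expression $[X^{Y_1}, \ldots, X^{Y_M}]$), producing an expression that is an identity in $G$ precisely when for every $X_v$-assignment some $A_b$ is pinned to $\cU_1 G$, i.e.\ when no valid coloring exists; compare \cref{lem:univTAUT}.

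The main obstacle is the outer combination of the $\sqrt m$ inner gadgets. Since each $A_b \in K \leq H$, the $A_b$'s cannot serve as conjugation bases in another $\Phi$-type gadget, as this would force the result into $\cU_1 G$ by (II); nor does a naive product $\prod_b A_b$ work, because an unrestricted $A_b$ can absorb the $\cU_1 G$-restriction of a bad block. The construction must instead test each $A_b$ individually against membership in $\cU_1 G$ and aggregate these per-block tests across blocks through the shared vertex variables $X_v$, without any explicit conjunction but via the group structure alone. Designing this aggregation so that the ``every block is good'' signal propagates correctly while keeping the total size at $2^{O(\sqrt m)}$ is the crux of the argument; it is here that the atomic universal definability of $\Fit(G)$ and the shared-variable coupling between blocks play the decisive role, mirroring the nested two-level commutator construction in the proof of \cref{prop:notANDweakrefined}.
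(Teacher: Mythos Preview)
Your overall strategy---reduce \KColoring{C} with $C=\abs{G/H}$, use edge gadgets $X_uX_v^{-1}$, partition edges into $\sqrt m$ batches, build an inner gadget per batch via iterated commutators with conjugates of the edge gadgets---matches the paper exactly. Your inner gadget $\Psi_t$ is essentially the paper's $\gamma_r$ (the paper additionally takes a product of $\abs{K}$ disjoint copies so the range is all of $K$ rather than just a generating set of $\eta_{g_1}\cdots\eta_{g_t}(K)$, a detail you gloss over).

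The genuine gap is the outer aggregation, which you flag as ``the crux'' but do not resolve. Your suggestion that atomic universal definability of $\Fit(G)$ and ``shared-variable coupling'' are needed here is a wrong turn. The paper's outer layer is simply another iterated commutator, but now \emph{of the inner gadgets themselves}: with $M$ disjoint copies $\gamma_r^{(1)},\dots,\gamma_r^{(M)}$ of each $\gamma_r$, set
\[
\delta = \bigl[\gamma_1^{(1)},\dots,\gamma_1^{(M)},\dots,\gamma_R^{(1)},\dots,\gamma_R^{(M)}\bigr].
\]
This works precisely because $\FitL(K)=2$. When every batch is good, each $\gamma_r^{(\mu)}$ ranges independently over all of $K$; since $K$ is not nilpotent, $\smcomm{MR}{K}_{\Set}$ contains a fixed $\tilde h\neq 1$, so $\delta$ can attain $\tilde h$. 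When some batch $r$ is bad, all $M$ copies $\gamma_r^{(\mu)}$ lie in $\cU_1K$; after the first such term the running left-nested commutator enters the normal subgroup $\cU_1K$, and after $M$ consecutive terms it lies in $\smcomm{M}{\cU_1K}=1$ because $\cU_1K$ is nilpotent. Thus $\delta=1$ under \emph{every} assignment in this case. In particular the \emph{same} expression $\delta$ handles both problems: $\delta=\tilde h$ is satisfiable iff a valid coloring exists, and $\delta$ is identically $1$ iff no valid coloring exists. No dualisation via \cref{lem:universallyd} is required.

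So the missing idea is that the two levels of the construction exploit \emph{different} dichotomies: the inner level uses $\eta_g(K)=K$ versus $\eta_h(K)\leq \cU_1K$ (assumptions (I)/(II)), while the outer level uses the non-nilpotency of $K$ versus the nilpotency of $\cU_1K$. You correctly saw that you cannot reuse the inner mechanism at the outer level, but you did not spot that the hypothesis $\FitL(K)=2$ supplies exactly the second dichotomy needed.
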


 \newcommand{\numBatches}{R}
\newcommand{\indexBatchOut}{r}
\newcommand{\indexBatchIn}{s}
\newcommand{\indexK}{k}
\newcommand{\indexMgamma}{\nu}
\newcommand{\indexMdelta}{\mu}
\newcommand{\numKInd}{T}
\newcommand{\indexKInd}{t}

\subparagraph*{Proof outline.}	The crucial observation for this theorem is the same as for \cref{prop:notANDweakrefined}: that, roughly speaking, the $n$-input \AND can be decomposed into the conjunction of $\sqrt{n}$ many $\sqrt{n}$-input \AND{}s. We use this observation in order to reduce the \KColoring{C} problem to \EQNSAT.
More precisely, given a graph $\Gamma$ with $n$ vertices and $m$ edges, we construct an expression $\delta$ and an element $\tilde h \in G$ such that 
\begin{enumerate}[(A)]
	\item the length of $\delta$ is in $2^{\Oh(\sqrt{m+n})}$, \label{pointA}
	\item $\delta$ can be computed in time polynomial in its length,\label{pointB}
	\item $\delta =\tilde h$ is satisfiable if and only if $\Gamma$ has a valid $C$-coloring, and\label{pointC}
	\item $\sigma(\delta) = 1$ holds for all assignments $\sigma$ if and only if $\Gamma$ does \emph{not} have a valid $C$-coloring. \label{pointD}
\end{enumerate}
 For the number of colors we use $C = \abs{G/H}$. 
 Let $N$ denote the input length for \EQNSAT (resp.\ \EQNID).
 A $2^{o(\log^2N)}$-time algorithm for \EQNSAT (resp.\ \EQNID), thus, would imply a $2^{o(n+m)}$-time algorithm for \KColoring{C} contradicting ETH. 
Hence, it is enough to show points (\ref{pointA})--(\ref{pointD}).

 In order to construct the expression $\delta$, we assign a variable $X_i$ to every vertex $v_i$ of $\Gamma$. Every assignment $\sigma$ to the variables $X_i$ will give us a coloring $\chi_\sigma$ of $\Gamma$ (to be defined later). During the proof, we also introduce some auxiliary variables. The aim is to construct $\delta$ in a way that an assignment $\sigma$ to the variables $X_i$
 can be extended to a satisfying assignment for $\delta =\tilde h$ if and only if $\chi_\sigma$ is a valid coloring of $\Gamma$ (see \cref{lem:reductioncorrect}).

We start by grouping the edges into roughly $\sqrt{m}$ batches of $\sqrt{m}$ edges each. For each batch of edges, we construct an expression $\gamma_\indexBatchOut$ (where $\indexBatchOut$ is the number of the batch) such that for every assignment $\sigma$ to the variables $X_i$ we have
 \begin{itemize}
 	 \item if  $\chi_\sigma$ assigns the same color to two endpoints of an edge in the $\indexBatchOut$-th batch, then for every assignment to the auxiliary variables, $\gamma_\indexBatchOut$ evaluates to something in $\cU_1K$,
 	\item otherwise, for every element $h \in K$, there is an assignment to the auxiliary variables such that $\gamma_\indexBatchOut$ evaluates to $h$.
 \end{itemize}
A more formal statement of this can be found in \cref{lem:assignmentextension1}.
The expression $\delta$ combines all the $\gamma_\indexBatchOut$ as an iterated commutator such that if one of the $\gamma_\indexBatchOut$ evaluates to something in $\cU_1K$, then $\delta$ evaluates to $1$, and, otherwise, there is some assignment to the auxiliary variables such that $\delta$ evaluates to the fixed element $\tilde h$.

\begin{proof}
Let $C = \abs{G/H}$.
Let us describe how the \KColoring{C} problem for a given graph $\Gamma=(V,E)$ is reduced to an instance of \EQNSAT (resp.\ \EQNID). 
 We denote $V= \oneset{v_1, \dots, v_n}$. For every vertex $v_i$ we introduce a variable $X_i$ and we set $\cX = \oneset{X_1, \dots, X_n}$. 
 By fixing a bijection $\abs{G/H} \to \interval{1}{C}$, we obtain a correspondence between assignments $\cX \to G$ and colorings $V \to \interval{1}{C}$ (be aware that it is \emph{not} one-to-one). During the construction we will also introduce a set $\cY$ of auxiliary variables. As outlined above, the idea is that an assignment $\cX \to G$ represents a valid coloring if and only if there is an assignment to the auxiliary variables under which the equation evaluates to a non-identity element.

For each edge $\oneset{v_i,v_j} \in E$, we introduce one edge gadget $X_iX_j^{-1}$ (it does not matter which one is the positive variable).
Now, we group these gadgets into $\numBatches$ batches of $\numBatches$ elements each (if the number of gadgets is not a square, we duplicate some gadgets)~-- \ie we choose $\numBatches = \ceil{\sqrt{m}\;\!}$. How the gadgets exactly are grouped together does not matter.

For $\indexBatchOut \in \interval{1}{\numBatches}$ and $\indexK\in \interval{1}{\abs{K}}$ let $\alpha_{\indexBatchOut,\indexK}$ be an expression which induces $K$ (\ie all $\alpha_{\indexBatchOut,\indexK}$ are the same expressions but with disjoint sets of variables). Such expressions exist by \cref{lem:Kinducible}.
Let the variables of $\alpha_{\indexBatchOut,\indexK}$ be $Y_{\indexBatchOut,\indexK,\indexKInd}$ for $\indexKInd\in \interval{1}{\numKInd} $ for some $\numKInd \in \N$. 
 Moreover, we introduce more auxiliary variables $Z_{\indexBatchOut,\indexK,\indexBatchIn,\indexMgamma}$ for $\indexBatchOut \in \interval{1}{\numBatches}$, $\indexK\in \interval{1}{\abs{K}}$, $\indexBatchIn \in \interval{1}{\numBatches}$, and $ \indexMgamma \in \interval{1}{M}$  (recall that $M$ is chosen such that, in particular, $\fitcomm{H_1}{H_2} = \mcomm{H_1}{M+1}{H_2}$ for arbitrary normal subgroups $H_1, H_2$ of $G$) and we set 
\[\cY'_\indexBatchOut = \set{\vphantom{\big(} Z_{\indexBatchOut,\indexK,\indexBatchIn,\indexMgamma},\; Y_{\indexBatchOut,\indexK,\indexKInd}}{\indexK\in \interval{1}{\abs{K}}, \indexBatchIn \in \interval{1}{\numBatches}, \indexMgamma \in \interval{1}{M}, \indexKInd \in \interval{1}{\numKInd}  }.\] 
Let $\beta_{\indexBatchOut,1}, \dots, \beta_{\indexBatchOut,\numBatches}$ be the gadgets of the $\indexBatchOut$-th batch for some $\indexBatchOut \in \interval{1}{\numBatches}$. We define
\begin{align}
\gamma_\indexBatchOut = 
\prod_{\indexK=1}^{\abs{K}} \left[ \alpha_{\indexBatchOut,\indexK},  \beta_{\indexBatchOut,1}^{Z_{\indexBatchOut,\indexK,1,1}}, \dots, \beta_{\indexBatchOut,1}^{Z_{\indexBatchOut,\indexK,1,M}}, \dots,\beta_{\indexBatchOut,\numBatches}^{Z_{\indexBatchOut,\indexK,\numBatches,1}}, \dots, \beta_{\indexBatchOut,\numBatches}^{Z_{\indexBatchOut,\indexK,\numBatches,M}}\right].\label{eq:gammak}
\end{align}

 We do this for every batch of gadgets. The following observation is crucial:

 \begin{lemma}\label{lem:assignmentextension1}
 	Let $\sigma\colon \cX \to G$ be an assignment and let $\indexBatchOut \in \interval{1}{\numBatches}$. 
\begin{itemize}			
 	\item	If $\sigma(\beta_{\indexBatchOut,\indexBatchIn}) \in G\setminus H$ for all $\indexBatchIn$, then
 		$\displaystyle \set{\vphantom{k^h}(\sigma \cup \sigma')(\gamma_\indexBatchOut)}{\sigma' : \cY_\indexBatchOut' \to G} = K,$
 	\item	Otherwise,
 		$\displaystyle\set{\vphantom{k^h}(\sigma \cup \sigma')(\gamma_\indexBatchOut)}{\sigma' : \cY_\indexBatchOut' \to G} \leq \cU_1K.$
 	\end{itemize}
 \end{lemma}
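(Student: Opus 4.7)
The plan is to fix an arbitrary $\sigma\colon\cX\to G$, write $g_\indexBatchIn=\sigma(\beta_{\indexBatchOut,\indexBatchIn})$ for $\indexBatchIn\in\interval{1}{\numBatches}$, and exploit the structure of \eqref{eq:gammak}. Since the variable blocks attached to different values of $\indexK$ in the product are pairwise disjoint, the set of values $(\sigma\cup\sigma')(\gamma_\indexBatchOut)$ as $\sigma'$ ranges over $\cY'_\indexBatchOut\to G$ is exactly the set of $\abs{K}$-fold products $S_0^{\abs{K}}$, where $S_0$ denotes the set of values of one inner commutator (independent of $\indexK$).

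First, I would identify $S_0$ explicitly. Because $\alpha_{\indexBatchOut,\indexK}$ induces $K$, one has
\[
 S_0 \;=\; \set{[a,\,y_1^{(1)},\dots,y_1^{(M)},\dots,y_\numBatches^{(1)},\dots,y_\numBatches^{(M)}]}{a\in K,\ y_\indexBatchIn^{(i)}\in g_\indexBatchIn^G}.
\]
Since $K$ and each conjugacy class $g_\indexBatchIn^G$ are $G$-invariant, \cref{lem:setcommutator} yields $\gen{S_0}=K_\numBatches$, where we recursively set $K_0=K$ and $K_\indexBatchIn=\eta_{g_\indexBatchIn}(K_{\indexBatchIn-1})$. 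Moreover $1\in S_0$ (take $a=1$). In any finite group, a generating set that contains the identity already yields the whole subgroup as a product set of bounded length: the ascending chain $S_0\subseteq S_0^2\subseteq\cdots$ must strictly grow until it stabilises to a submonoid of $K_\numBatches$, and a finite submonoid of a group is a subgroup. Since $\abs{S_0^{j+1}}>\abs{S_0^j}$ before stabilisation and $\abs{K_\numBatches}\leq\abs{K}$, this happens by step $\abs{K}$, giving $S_0^{\abs{K}}=\gen{S_0}=K_\numBatches$.

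It remains to determine $K_\numBatches$ in the two cases. In the first case, if $g_\indexBatchIn\notin H$ for every $\indexBatchIn$, then Assumption~(\ref{assumption1}) together with a straightforward induction on $\indexBatchIn$ gives $K_\indexBatchIn=\eta_{g_\indexBatchIn}(K_{\indexBatchIn-1})=\eta_{g_\indexBatchIn}(K)=K$, so $K_\numBatches=K$ and the set of values equals $K$. In the second case, let $\indexBatchIn_0$ be minimal with $g_{\indexBatchIn_0}\in H$. Since $K_{\indexBatchIn_0-1}\leq K$ and $\eta_g$ is monotone, we obtain $K_{\indexBatchIn_0}\leq\eta_{g_{\indexBatchIn_0}}(K)$; Assumption~(\ref{assumption2}) then gives $\FitL(\eta_{g_{\indexBatchIn_0}}(K))\leq 1$, and \cref{lem:Fitting} upgrades this to $K_{\indexBatchIn_0}\leq \cU_1G\cap K=\cU_1K$. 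All later terms $K_\indexBatchIn$ are contained in $K_{\indexBatchIn_0}$, so $K_\numBatches\leq\cU_1K$ and the set of values lies in $\cU_1K$.

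The main thing to get right is matching the nested commutator structure of $\gamma_\indexBatchOut$ with the iterated operator $\eta_{g_\numBatches}\circ\cdots\circ\eta_{g_1}$ via \cref{lem:setcommutator}, together with the observation that the $M$-fold repetition of each conjugacy class in \eqref{eq:gammak} is precisely what allows Assumption~(\ref{assumption1}) to be re-applied at every step of the induction without losing ground. The submonoid-closure argument that turns $\gen{S_0}=K$ into $S_0^{\abs{K}}=K$ is the other point that has to be spelled out.
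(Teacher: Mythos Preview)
Your proof is correct and follows essentially the same approach as the paper: you identify the set of values of $\gamma_\indexBatchOut$ via \cref{lem:setcommutator} as (a product set of) the iterated commutator $K_\numBatches=\eta_{g_\numBatches}\circ\cdots\circ\eta_{g_1}(K)$, then use Assumption~(\ref{assumption1}) for the first case and Assumption~(\ref{assumption2}) together with \cref{lem:Fitting} for the second. Your presentation is in fact slightly more uniform than the paper's---you compute $S_0^{\abs{K}}=K_\numBatches$ once and then determine $K_\numBatches$ in each case, whereas the paper treats the second case by the more direct observation that $(\sigma\cup\sigma')(\gamma_\indexBatchOut)\in\eta_{g_\indexBatchIn}(K)$ for any single $\indexBatchIn$ with $g_\indexBatchIn\in H$---and you spell out the submonoid-stabilisation argument that the paper leaves implicit.
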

 
 \newcommand{\comMu}{, \kern.1em_{M}\,\kern.1em}
 
 \begin{proof}
 	By construction, we have $(\sigma \cup \sigma')(\alpha_{\indexBatchOut,\indexK}) \in K$ for all $\indexBatchOut$ and $\indexK$ and all assignments $\sigma $ and $\sigma'$. Since $K$ is normal, it follows that $(\sigma \cup \sigma')(\gamma_\indexBatchOut) \in K$ for all assignments $\sigma $ and $\sigma'$.
 	
 	Consider the case that $g_\indexBatchIn\coloneqq \sigma(\beta_{\indexBatchOut,\indexBatchIn}) \in G\setminus H$ for all $\indexBatchIn \in \interval{1}{\numBatches}$. By assumption (\ref{assumption1}), we have $K = \eta_{g_1}(K) = \eta_{g_2}(\eta_{g_1}(K)) = \cdots = \eta_{g_\numBatches} \dots \eta_{g_2}(\eta_{g_1}(K))\cdots)  $. By \cref{lem:setcommutator}, it follows that 
 	$K = \gen{ [K\comMu g_1^G, \dots\comMu g_\numBatches^G]_{\Set}}.$
 	Since $1 \in  [K\comMu  g_1^G, \dots\comMu g_\numBatches^G]_{\Set}$ and every element in $K$ can be written as a product of length at most $\abs{K}$ over any generating set, we conclude  $K = \left([K\comMu g_1^G,\dots\comMu g_\numBatches^G]_{\Set}\right)^{\abs{K}}$. This is exactly the form how $\gamma_\indexBatchOut$ was defined in \cref{eq:gammak} (recall that $\alpha_{\indexBatchOut,\indexBatchIn}$ can evaluate to every element of $K$). Therefore, for each $h \in K$, there is an assignment $\sigma'\colon \cY_\indexBatchOut' \to G$ such that $(\sigma \cup \sigma')(\gamma_\indexBatchOut) = h$.

 	On the other hand, let $g_\indexBatchIn\coloneqq \sigma(\beta_{\indexBatchOut,\indexBatchIn}) \in H$ for some $\indexBatchIn$.
 	Then, by assumption (\ref{assumption2}) we have $\FitL(\eta_{g_\indexBatchIn}(K)) \leq 1$.
 	 Since $(\sigma \cup \sigma')(\gamma_\indexBatchOut) \in \eta_{g_\indexBatchIn}(K)$, we obtain $(\sigma \cup \sigma')(\gamma_\indexBatchOut) \in \cU_1K$ by \cref{lem:Fitting}.
 \end{proof}

Now, for every set of auxiliary variables $\cY_\indexBatchOut'$ we introduce $M$ disjoint copies, which we call $\cY_\indexBatchOut^{(\indexMdelta)}$ for $\indexMdelta \in \interval{1}{M}$.
 We write $\gamma_\indexBatchOut^{(\indexMdelta)}$ for the copy of $\gamma_\indexBatchOut$ where the variables of $\cY_\indexBatchOut'$ are substituted by the corresponding ones in  $\cY_\indexBatchOut^{(\indexMdelta)}$ (the variables $\cX$ are shared over all $\gamma_\indexBatchOut^{(\indexMdelta)}$). We set
\[\delta  = \bigl[\gamma_1^{(1)}, \dots , \gamma_1^{(M)}, \dots ,   \gamma_\numBatches^{(1)}, \dots , \gamma_\numBatches^{(M)} \bigr].\]
Finally, fix some $\tilde h \in K \setminus 1$ with $\tilde h \in \smcomm{M\kern-.05em\cdot\kern-.05em\numBatches}{K}_{\Set}$ and set $\cY =  \bigcup_{\indexBatchOut,\indexMdelta} \cY_\indexBatchOut^{(\indexMdelta)}$.

\begin{lemma}\label{lem:assignmentextension2}
Let $\sigma\colon \cX \to G$ be an assignment.
If $\sigma(\beta_{\indexBatchOut,\indexBatchIn}) \in G\setminus H$ for all $\indexBatchOut$ and $\indexBatchIn$, then there is some assignment $\sigma'\colon \cY \to G$ such that $(\sigma \cup \sigma')(\delta) = \tilde h$. Otherwise  $(\sigma \cup \sigma')(\delta) = 1$ for all $\sigma'\colon \cY \to G$.
\end{lemma}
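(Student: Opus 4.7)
My plan is to split on the dichotomy of \cref{lem:assignmentextension1}, applied batch by batch, and to lift it to a statement about $\delta$ using the pairwise disjointness of the auxiliary variable sets $\cY_r^{(\mu)}$.

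For the satisfying direction, suppose every $\sigma(\beta_{r,s})$ lies in $G\setminus H$. The first clause of \cref{lem:assignmentextension1} together with the disjointness of the $\cY_r^{(\mu)}$'s lets me choose $\sigma'$ so that each of the $MR$ values $a_{r,\mu}\coloneqq(\sigma\cup\sigma')(\gamma_r^{(\mu)})$ is any prescribed element of $K$. Since $\tilde h$ was picked inside $\smcomm{M\cdot R}{K}_{\Set}$, I can pick these values so that $[a_{1,1},\dots,a_{R,M}]=\tilde h$, and this yields the required extension $\sigma'$ with $(\sigma\cup\sigma')(\delta)=\tilde h$.

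For the trivialising direction, suppose some $\sigma(\beta_{r_0,s_0})$ lies in $H$. The second clause of \cref{lem:assignmentextension1} then forces $a_{r_0,\mu}\in\cU_1 K=\Fit(K)$ for every $\mu\in\interval{1}{M}$ and every extension $\sigma'$, while the remaining $a_{r,\mu}$ still merely lie in $K$. I would scan the outer commutator $[a_{1,1},\dots,a_{R,M}]$ from left to right and show that its partial value has already become $1$ by the end of batch $r_0$. Writing $p\in K$ for the partial commutator just before batch $r_0$ (and starting instead with $p=a_{1,1}\in\cU_1 K$ if $r_0=1$), the normality of $\cU_1 K$ in $G$ gives $[p,a_{r_0,1}]\in[K,\cU_1 K]\le\cU_1 K$, and a straightforward induction on $j$ shows that after the first $j$ elements of batch $r_0$ have been bracketed in, the partial value lies in $\gamma_j(\cU_1 K)$.

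The one non-routine step is to deduce from this that the partial value is trivial at the end of batch $r_0$, i.e.\ that $\gamma_M(\cU_1 K)=1$. This is where I expect the main care to be needed, but it follows because $\cU_1 K$ is nilpotent and the $M$ fixed in \cref{sec:prelims} is sufficiently large (taking $M=\abs{G}$ already exceeds the nilpotency class of $\cU_1 K$, so $\gamma_M(\cU_1 K)=1$). Once the running value is trivial, every subsequent outer bracket with any element of $G$ leaves it trivial, so $(\sigma\cup\sigma')(\delta)=1$ for every $\sigma'$, completing the dichotomy.
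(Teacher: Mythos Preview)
Your proposal is correct and follows essentially the same approach as the paper. The paper's proof is terser in the second direction---it simply asserts $(\sigma\cup\sigma')(\delta)\in\smcomm{M}{\cU_1K}=1$---whereas you spell out the left-to-right descent through $\gamma_j(\cU_1K)$ and justify $\gamma_M(\cU_1K)=1$ via the choice $M=\abs{G}$; but the underlying argument is identical.
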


\begin{proof}
	If $\sigma(\beta_{\indexBatchOut,\indexBatchIn}) \in G\setminus H$ for all $\indexBatchOut$ and $\indexBatchIn$, then by \cref{lem:assignmentextension1}, 
	$\big\{(\sigma \cup \sigma')(\gamma_\indexBatchOut^{(\indexMdelta)}) \:\big|\: \sigma' : \cY_\indexBatchOut^{(\indexMdelta)} \to G \big\} = K$ 
	for all $\indexBatchOut\in \interval{1}{\numBatches}$ and $\indexMdelta\in \interval{1}{M}$. Hence, since we chose the auxiliary variables $\cY_\indexBatchOut^{(\indexMdelta)}$ to be all disjoint, we obtain
	\[ \tilde h \in \smcomm{M\kern-.05em\cdot\kern-.05em\numBatches}{K}_{\Set} \sse \set{(\sigma \cup \sigma')(\delta)}{\sigma' : \cY_\indexBatchOut^{(\indexMdelta)} \to G}.\]
	
On the other hand, if $\sigma(\beta_{\indexBatchOut,\indexBatchIn}) \in H$, then, by \cref{lem:assignmentextension1}, 
for all $\sigma'\colon \cY \to G$ and all $\indexMdelta\in \interval{1}{M}$ we have  $(\sigma \cup \sigma')(\gamma_\indexBatchOut^{(\indexMdelta)}) \in \cU_1K$. Hence, $(\sigma \cup \sigma')(\delta ) \in \smcomm{M}{\cU_1K}  = 1$.
\end{proof}

Now we are ready to define our equation as $\delta\tilde h^{-1} $ for the reduction of \KColoring{C} to $ \EQNSAT(G)$ and  $\delta $ for the reduction to $ \EQNID(G)$. 

The final step is to show points  (\ref{pointA})--(\ref{pointD}) from above.

For (\ref{pointA}) observe that the length of $\gamma_\indexBatchOut$ is $\Oh(2^{M\cdot\numBatches})$ for all $\indexBatchOut$. Thus, the length of $\delta$ is  $\Oh(2^{M\cdot\numBatches})\cdot \Oh(2^{M\cdot\numBatches}) \sse 2^{\Oh(\numBatches)} = 2^{\Oh(\sqrt{m})}$ as desired. Point (\ref{pointB}) is straightforward from the construction of $\delta$.

In order to see (\ref{pointC}) and (\ref{pointD}), we use \cref{lem:assignmentextension2} to prove another lemma.
We fix a bijection $\xi: G/H\to \interval{1}{C}$.
For an assignment $\sigma: \cX \to G$, we define a corresponding coloring $\chi_\sigma : V \to \interval{1}{C}$ by $\chi_\sigma(v_i) = \xi(\sigma(X_i)H)$.

\begin{lemma}\label{lem:reductioncorrect}
	Let  $\sigma: \cX \to G$ be an assignment. Then
	\begin{itemize}
		\item if $\chi_\sigma$ is valid, then there is an assignment $\sigma' :\cY \to G$ such that $(\sigma \cup \sigma')(\delta) = \tilde h\neq 1$,
		\item if $\chi_\sigma$ is \emph{not} valid, then for all assignments $\sigma' :\cY \to G$ we have $(\sigma \cup \sigma')(\delta) = 1$.
	\end{itemize}
\end{lemma}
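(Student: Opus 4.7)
The plan is to reduce this lemma directly to \cref{lem:assignmentextension2} by translating between the edge-coloring condition and the coset condition on the gadgets. The key observation is that each edge gadget is of the form $\beta_{\indexBatchOut,\indexBatchIn} = X_i X_j^{-1}$ corresponding to an edge $\oneset{v_i,v_j} \in E$, so for any assignment $\sigma\colon \cX \to G$ we have
\[
\sigma(\beta_{\indexBatchOut,\indexBatchIn}) = \sigma(X_i)\sigma(X_j)^{-1} \in H \iff \sigma(X_i)H = \sigma(X_j)H \iff \chi_\sigma(v_i) = \chi_\sigma(v_j),
\]
since $\chi_\sigma(v_i) = \xi(\sigma(X_i)H)$ and $\xi$ is a bijection. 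Thus $\sigma(\beta_{\indexBatchOut,\indexBatchIn}) \in G \setminus H$ if and only if the corresponding edge is properly colored by $\chi_\sigma$.

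First I would handle the case where $\chi_\sigma$ is a valid coloring. Then every edge is properly colored, so by the observation above, $\sigma(\beta_{\indexBatchOut,\indexBatchIn}) \in G \setminus H$ for all $\indexBatchOut \in \interval{1}{\numBatches}$ and $\indexBatchIn \in \interval{1}{\numBatches}$. Applying the first case of \cref{lem:assignmentextension2} yields an assignment $\sigma'\colon \cY \to G$ with $(\sigma \cup \sigma')(\delta) = \tilde h$, and $\tilde h \neq 1$ by construction.

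Next I would handle the case where $\chi_\sigma$ is \emph{not} valid. Then there exists an edge $\oneset{v_i,v_j} \in E$ with $\chi_\sigma(v_i) = \chi_\sigma(v_j)$. Recall the gadgets of that edge appear as $\beta_{\indexBatchOut_0,\indexBatchIn_0}$ for some batch index $\indexBatchOut_0$ and position $\indexBatchIn_0$ in the grouping, so $\sigma(\beta_{\indexBatchOut_0,\indexBatchIn_0}) \in H$. By the second case of \cref{lem:assignmentextension2}, we conclude $(\sigma \cup \sigma')(\delta) = 1$ for every $\sigma'\colon \cY \to G$, completing the proof.

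There is essentially no obstacle: the entire content of the lemma has already been packaged into \cref{lem:assignmentextension2}, and all that remains is to verify the equivalence between ``edge is properly colored'' and ``gadget lies outside $H$,'' which is an immediate consequence of the definitions of $\chi_\sigma$ and $\beta_{\indexBatchOut,\indexBatchIn}$ together with the fact that $\xi$ is a bijection. The only minor point to check is that duplicating gadgets (when $m$ is not a perfect square) does not cause trouble: duplicates of a monochromatic edge remain in $H$, and duplicates of a properly colored edge remain outside $H$, so the two bullet points of \cref{lem:assignmentextension2} still apply as stated.
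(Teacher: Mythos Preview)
Your proposal is correct and follows essentially the same approach as the paper: both reduce directly to \cref{lem:assignmentextension2} by observing that a gadget $X_iX_j^{-1}$ lands in $H$ precisely when the endpoints of the corresponding edge receive the same color. Your write-up is slightly more explicit about this equivalence and about the harmlessness of duplicated gadgets, but the argument is the same.
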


\begin{proof}
Let $\chi_\sigma$ be a valid coloring. First, observe that the gadgets all evaluate to some element outside of $H$ under $\sigma$. This is because, if there is a gadget $X_iX_j^{-1}$ that means that $\oneset{v_i, v_j} \in E$ and so $\chi_\sigma(v_i) \neq \chi_\sigma(v_j)$; hence, $\sigma(X_i) \neq \sigma(X_j)$ in $G/H$ (since $\xi$ is a bijection).
Therefore, by \cref{lem:assignmentextension2}, it follows that $\delta$ evaluates to $\tilde h$ under some proper assignment for $\cY$.

On the other hand, if $\chi_\sigma$ is not a valid coloring, then there is an edge $\oneset{v_i, v_j} \in E$ with $\chi_\sigma(v_i) = \chi_\sigma(v_j)$. Then we have $\sigma(X_i)H = \sigma(X_j)H$. Hence, by \cref{lem:assignmentextension2}, we obtain that $(\sigma_\chi\cup \sigma')(\delta)=1$ in $G$ for every $\sigma': \cY \to G$. 
\end{proof}
This concludes the proof of \cref{thm:main}.
\end{proof}

\section{Consequences}\label{sec:consequences}

In this section we derive our main result Corollary~\ref{cor:mainIntro}. We start again with a lemma.

\begin{lemma}\label{lem:KHG}
	For every finite solvable, non-nilpotent group $G$ of Fitting length $d$, there are proper normal subgroups $K\Nleq H \Nle G$ with $\FitL(K) = d -1$ and $\cU_{d -1}G \leq H$ such that
	\begin{itemize}
		\item for all $g \in G  \setminus H$ we have $\eta_g(K) = K$,
		\item for all $h \in H$ we have $\FitL(\eta_h(K)) < \FitL(K)$.
	\end{itemize} 
\end{lemma}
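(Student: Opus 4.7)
My plan is to choose $K$ to be a $G$-normal subgroup of $G$ that is minimal among those satisfying $K\leq\cU_{d-1}G$, $\FitL(K)=d-1$, and $[K,G]=K$. Such a $K$ exists because the nilpotent residual $\gamma_\infty G = \cL_1 G$ already satisfies all three: its Fitting length is $d-1$ by definition of the lower Fitting series, it lies in $\cU_{d-1}G$ because $G/\cU_{d-1}G$ is nilpotent, and $[\gamma_\infty G, G]=\gamma_\infty G$ by stabilization. I would then set $H = \{\,h\in G : \FitL(\eta_h(K)) < d-1\,\}$. Closure of $H$ under products follows from \cref{lem:repeateta}\,(iii), while closure under inverses and conjugation holds because $\gen{h^G}=\gen{(h^{-1})^G}$ and $\eta_{h^x}(K) = \eta_h(K)$ is $G$-normal. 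Hence $H\Nleq G$, and the condition on $h\in H$ in the lemma holds by the very definition of $H$.

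Next, I would verify $\cU_{d-1}G\leq H$ by passing to $\bar G = G/\cU_{d-2}G$: its Fitting subgroup $\bar F = \cU_{d-1}G/\cU_{d-2}G$ is nilpotent, so $\mcomm{\bar F}{M}{\bar F}=1$. For any $h\in\cU_{d-1}G$ we have $\gen{h^G}\leq\cU_{d-1}G$, so the image of $\eta_h(K)$ in $\bar G$ lies in $\mcomm{\bar K}{M}{\bar F}=1$, giving $\eta_h(K)\leq\cU_{d-2}G$ and $\FitL(\eta_h(K))\leq d-2$. For the condition on $g\notin H$, I would set $L=\eta_g(K)$: then $L$ is $G$-normal, $L\leq K\leq\cU_{d-1}G$, and by stabilization $L = \mcomm{K}{M+1}{g^G} = [L,g^G]\leq[L,G]\leq L$, so $[L,G]=L$. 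Since $\FitL(L)\leq\FitL(K)=d-1$ and $\FitL(L)\geq d-1$ because $g\notin H$, we have $\FitL(L)=d-1$, so $L$ satisfies all three minimality conditions; minimality of $K$ then forces $L=K$.

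The main obstacle is showing $H\lneq G$. I would argue by contradiction: if $H=G$, then $\eta_g(K)\leq\cU_{d-2}G$ for every $g$, which in $\bar G$ reads $\mcomm{\bar K}{M}{\gen{\bar g^{\bar G}}}=1$ for every $\bar g$. To combine these, I plan to use the standard fact (proved by iterating the identity $[A,N_1N_2]=[A,N_1]\cdot[A,N_2]$, valid whenever $N_1,N_2$ are normal in the ambient group and $A$ is $G$-invariant) that if finitely many normal subgroups $N_1,\dots,N_r$ satisfy $\mcomm{A}{c}{N_i}=1$, then $\mcomm{A}{rc}{N_1\cdots N_r}=1$. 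Applying this to $A=\bar K$ with $N_i$ ranging over the (finitely many) normal closures $\gen{\bar g^{\bar G}}$ whose product is $\bar G$, I would obtain $\mcomm{\bar K}{c}{\bar G}=1$ for some $c$. But $[K,G]=K$ forces $[\bar K,\bar G]=\bar K$, hence iteratively $\mcomm{\bar K}{c}{\bar G}=\bar K\neq 1$, since $K\not\leq\cU_{d-2}G$ by the comparison $\FitL(K)=d-1>d-2$. This contradiction gives $H\lneq G$, completing the proof.
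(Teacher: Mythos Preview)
Your proof is correct and takes a genuinely different route from the paper. The paper constructs $K$ \emph{iteratively}: starting from $K_1 = \eta_{g_1}(G)$ for some $g_1 \notin \cU_{d-1}G$, it repeatedly replaces $K_{i-1}$ by $K_i = \eta_{g_i}(K_{i-1})$ whenever this strictly shrinks the subgroup while preserving Fitting length $d-1$; the process terminates at some $K$, and $H$ is then defined exactly as you do. The payoff of that approach is that $H \neq G$ comes for free: by construction $K = \eta_g(K')$ for the last $g$ used, so \cref{lem:repeateta}\,(i) gives $\eta_g(K) = K$, and this $g$ witnesses $g \notin H$. By contrast, your minimality characterization of $K$ is conceptually cleaner, and your verification that $\eta_g(K) = K$ for $g \notin H$ via minimality is slick; but you then have to work harder for $H \neq G$, invoking the combinatorial fact that $\mcomm{A}{c}{N_i}=1$ for all $i$ implies $\mcomm{A}{rc}{N_1\cdots N_r}=1$. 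That fact is indeed correct (expand via $[A,N_1N_2]=[A,N_1][A,N_2]$ and use that $[A,N_{i_1},\dots,N_{i_k}] \leq \mcomm{A}{m}{N_j}$ whenever $N_j$ occurs $m$ times, since each $[B,N_i]\leq B$ for normal $B$), so your contradiction argument goes through. Both constructions yield $[K,G]=K$, so the downstream use of inducibility in \cref{lem:Kinducible} is unaffected.
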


The construction for \cref{lem:KHG} resembles the ones in Lemmas 5 and 6 of \cite{Kompatscher19}. However, while in \cite{Kompatscher19} a minimal normal subgroup $N$ of a quotient $G/K$ is constructed such that $r_g$ with $r_g(x) = [x,g]$ is an automorphism of $N$ (and $N$ is abelian), in our case this is not enough since we need to apply commutator constructions to our analog of $N$ in the spirit of the divide-and-conquer approach of \cref{prop:notANDweakrefined}.

\begin{proof}
	Let $g_1 \in G \setminus\cU_{d-1} G$ where $d$ is the Fitting length of $G$. We construct a sequence of normal subgroups $K_1, K_2, \ldots $ of $ G$ as follows: we set 
	$K_1 = \eta_{g_1}(G) $. By \cref{lem:addGincomm}, $ K_1= \gamma_\infty\gen{g_1^G}$, so it has Fitting length $d-1$.
	
	Now, while there is some $g_i \in G$ such that $\eta_{g_i}(K_{i-1}) < K_{i-1}$ and $\FitL(\eta_{g_i}(K_{i-1})) = \FitL(K_{i-1})$, we set $K_i = \eta_{g_i}(K_{i-1})$ and continue.
	Since $K_i$ is a proper subgroup of $K_{i-1}$, this process eventually terminates. We call the last term $K$. 
	We claim that $K$ satisfies the statement of \cref{lem:KHG}.	
	By construction for every $g \in G$ one of the two cases 
	\begin{itemize}
		\item $\eta_g(K) = K$ or
		\item $\FitL(\eta_{g}(K)) < \FitL(K)$
	\end{itemize}	
	applies. Moreover, since $K = \eta_g(K')$ for some $K'\leq G$ and some $g \in G$, we have $K = \eta_g(K') = \eta_g(\eta_g(K')) = \eta_g(K)$ by \cref{lem:repeateta} (i). 
	By \cref{lem:Hsubgroup} (iii), the elements $\set{h \in G}{\FitL(\eta_h(K)) < \FitL(K)}$ form a subgroup $H$ of $G$. Clearly $H$ is normal (by the definition of $\eta_h$) and $K \leq \cU_{d -1}G \leq H$ because $\FitL(\mcomm{K}{M}{\cU_{d -1}G}) = \FitL(K) - 1$. Since there is some $g \in G$ with $K = \eta_g(K)$, we have $H \neq G$.
\end{proof}	
Be aware that $K$ depends on the order the $g_i$ were chosen. Indeed, if $G$ is a direct product of two groups $G_1$ and $G_2$ of equal Fitting length, then $K$ will either be contained in $G_1$ or in $G_2$~-- in which factor depends on the choice of the $g_i$.

\begin{theorem}[Corollary~\ref{cor:mainIntro}]\label{thm:main2}
	\iffull
	Let $G$ be a finite solvable group meeting one of the following conditions:
	\begin{enumerate}
		\item $\FitL(G) = 3$ and $\abs{G/\cU_2G}$ has a prime divisor 3 or greater (\ie $G/\cU_2G$ is not a 2-group),
		\item $\FitL(G) \geq 4$.		
	\end{enumerate}
\else
	Let $G$ be a finite solvable group such that either $\FitL(G) = 3$ and $\abs{G/\cU_2G}$ has a prime divisor 3 or greater (\ie $G/\cU_2G$ is not a 2-group)
	or $\FitL(G) \geq 4$.		
\fi
	Then $\EQNSAT(G)$ and $\EQNID(G)$ cannot be decided in deterministic time $2^{o(\log^2N)}$ under ETH. In particular, $\EQNSAT(G)$ and $\EQNID(G)$ are not in \P under ETH.
\end{theorem}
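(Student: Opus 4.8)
The plan is to reduce \cref{thm:main2} to \cref{thm:main}, arguing by induction on $\abs{G}$. The reduction rests on two facts: passing to an inducible subgroup preserves hardness of both $\EQNSAT$ and $\EQNID$, and passing to an atomically universally definable normal quotient does so as well (\cref{lem:inducibleEQN,lem:univTAUT}). Hence it suffices, for every $G$ as in the statement, to exhibit a group $G'$ reachable from $G$ by a chain of such operations that satisfies \emph{all} hypotheses of \cref{thm:main}; then $\EQNSAT(G')$ and $\EQNID(G')$ are hard, and therefore so are $\EQNSAT(G)$ and $\EQNID(G)$.

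\emph{Base case $\FitL(G)=3$ with $\abs{G/\cU_2G}$ having an odd prime divisor.} Apply \cref{lem:KHG} to get proper normal subgroups $K\Nleq H\Nleq G$ with $\FitL(K)=2$, $\cU_2G\leq H$, and the two bulleted properties of that lemma. Since $\FitL(\eta_h(K))<\FitL(K)=2$ is equivalent to $\FitL(\eta_h(K))\leq 1$, these are exactly conditions \ref{assumption1} and \ref{assumption2} of \cref{thm:main}, so the only hypothesis left to verify is $\abs{G/H}\geq 3$. This is where the assumption on $\abs{G/\cU_2G}$ is used. As $\cU_3G=G$, the group $G/\cU_2G$ is nilpotent; not being a $2$-group it has a nontrivial Sylow $p$-subgroup for an odd prime $p$. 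I would run the construction of $K$ in \cref{lem:KHG} starting from an element $g_1$ whose image in $G/\cU_2G$ is a nontrivial element of that Sylow subgroup (so $g_1\notin\cU_2G=\cU_{\FitL(G)-1}G$ and $\eta_{g_1}(G)$ indeed has Fitting length $2$), and steer the subsequent shrinking — using that $\eta_{g_1}$ is idempotent and only shrinks (\cref{lem:repeateta}) — so that the final $K$ still satisfies $\eta_{g_1}(K)=K$; then $g_1\notin H$. Consequently $H/\cU_2G$ is a proper subgroup of the nilpotent group $G/\cU_2G$ that omits the odd-order element $g_1\cU_2G$; since an index-$2$ subgroup of a nilpotent group contains its entire odd part, $H/\cU_2G$ cannot have index $2$, so it has index $\geq 3$, i.e.\ $\abs{G/H}\geq 3$, and \cref{thm:main} applies to $G$.

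\emph{Inductive step $\FitL(G)=d\geq 4$.} I would reduce to a proper subgroup or quotient of $G$ to which the induction hypothesis applies. Every $\cU_iG$ is atomically universally definable (\cref{lem:universallyd}), so every $G/\cU_iG$ is a hardness-preserving quotient of any prescribed Fitting length $\leq d$; and $\gamma_\infty G$ together with subgroups of the form $\eta_g(G)$ (for $g$ with $\eta_g(\eta_g(G))=\eta_g(G)$) are inducible (\cref{lem:inducible,lem:Kinducible}) of Fitting length $d-1$. When $d\geq 5$ the subgroup $\gamma_\infty G$ has Fitting length $\geq 4$, is of smaller order than $G$, and the induction hypothesis applies to it. When $d=4$ one has to pass to a Fitting-length-$3$ subquotient (such as $G/\cU_1G$, or $\gamma_\infty G$, or $\eta_g(G)$ for a suitable $g$) whose $\cU_2$-quotient has an odd prime divisor, and invoke the base case. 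Arranging these reductions so that one always lands on a group covered by the statement — in particular handling the situation where the upper Fitting factors of $G$ above $\cU_2G$ are all $2$-groups — is the main technical obstacle of the proof, and is precisely where the Fitting-structure hypotheses of \cref{cor:mainIntro} are really needed.

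\emph{Conclusion.} Once $G'$ satisfies the hypotheses of \cref{thm:main}, its proof transforms an $n$-vertex, $m$-edge \KColoring{C} instance with $C=\abs{G'/H}\geq 3$ into an expression of length $2^{\Oh(\sqrt{m+n})}$, computable in time polynomial in its length, that is satisfiable (respectively, an identity) if and only if the graph is (respectively, is not) $C$-colorable. Therefore a $2^{o(\log^2 N)}$-time algorithm for $\EQNSAT(G)$ or $\EQNID(G)$ would solve \KColoring{C} in time $2^{o(m+n)}$, contradicting ETH; in particular neither problem is in $\P$ under ETH.
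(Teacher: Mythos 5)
Your high-level strategy matches the paper's: feed \cref{lem:KHG} into \cref{thm:main} and transfer hardness along inducible subgroups and atomically universally definable quotients. But both places where you defer the real work are exactly where the proof is nontrivial, and your proposed fixes do not go through as stated. In the base case the only missing hypothesis of \cref{thm:main} is $\abs{G/H}\geq 3$, and your plan is to ``steer'' the shrinking process of \cref{lem:KHG} so that the final $K$ still satisfies $\eta_{g_1}(K)=K$ for your chosen odd-order element $g_1$. This is unjustified: after a shrinking step $K_i=\eta_{g_i}(K_{i-1})$ there is no reason why $\eta_{g_1}(K_i)=K_i$ should persist; you cannot simply re-apply $\eta_{g_1}$ (its image may drop in Fitting length, destroying $\FitL(K)=2$), nor stop the process early (then some $g$ would satisfy neither alternative of the dichotomy, and $H$ need not even be a subgroup). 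A priori the terminal $H$ could have index $2$ even though $G/\cU_2G$ is not a $2$-group (e.g.\ $G/\cU_2G\cong C_6$ with $H/\cU_2G\cong C_3$). The paper sidesteps this entirely: it first replaces $G$ by the inducible subgroup $\wt G$ generated by all $2^\nu$-th powers, where $2^\nu$ is the $2$-part of $\abs{G/\cU_2G}$; then $\wt G$ still has Fitting length $3$, $\wt G/\cU_2\wt G$ has \emph{odd} order, and hence \emph{every} proper normal $H\geq\cU_2\wt G$ automatically has index at least $3$~--- no steering needed.

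For $\FitL(G)=d\geq 4$ you explicitly leave open ``the situation where the upper Fitting factors of $G$ above $\cU_2G$ are all $2$-groups'', but that is the entire content of this case, so the induction is incomplete (and for $d=4$, passing to $\gamma_\infty G$ or $G/\cU_1G$ yields a Fitting-length-$3$ group with no control over its top quotient). The paper's resolution: if $\abs{G/\cU_{d-1}G}$ has an odd prime factor, apply the Fitting-length-$3$ case to $G/\cL_3G$ for \EQNSAT and to $G/\cU_{d-3}G$ for \EQNID; if $\abs{G/\cU_{d-1}G}$ is a power of two, take $\wt G=\gen{2^\nu\text{-th powers}}\leq\cU_{d-1}G$ and observe that $\cU_{d-1}G/\cU_{d-2}G$ cannot be a $2$-group (otherwise $G/\cU_{d-2}G$ would be nilpotent, contradicting $\FitL(G)=d$), so $\wt G/\cU_{d-2}\wt G$ is not a $2$-group and the Fitting-length-$3$ case applies to $\wt G/\cL_3\wt G$ (resp.\ $\wt G/\cU_{d-3}\wt G$). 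You need to incorporate this ``subgroup generated by $2^\nu$-th powers'' device at both points; without it the argument has a genuine gap.
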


\begin{proof}
	Consider the case that $G$ has Fitting length 3 and $\abs{G/\cU_2G}$ has a prime divisor 3 or greater.
	Let $2^\nu$ for some $\nu \in \N$ be the greatest power of two dividing $\abs{G/\cU_2G}$. Then, the subgroup $\wt G$ generated by all $2^\nu$-th powers is normal and it is not contained in $\cU_2G$. Therefore, by \cref{lem:Fitting} it has Fitting length $3$ as well. 
	Also, by \cref{lem:Fitting}, we know that $\cU_2\wt G =\wt G \cap  \cU_2G $. Hence, $\wt G/\cU_2\wt G$ is a subgroup of $G/\cU_2G$. Moreover, since $\wt G$ is generated by $2^\nu$-th powers, the generators of $\wt G$ have odd order in $\wt G/\cU_2\wt G$. Since $\wt G/\cU_2\wt G$ is nilpotent, it follows that $|\wt G/\cU_2\wt G|$ is odd (recall that a nilpotent group is a direct product of $p$-groups). 
	
	Since $\wt G$ is inducible in $G$, by \cref{lem:inducibleEQN}, it suffices to show that $\wt G$ satisfies the requirements of \cref{thm:main}. For this, we use \cref{lem:KHG}, which gives us normal subgroups $K \Nleq H \Nle \wt G$ with $\cU_2\wt G \leq H$,  $\FitL(K) = 2$ and such that for all $g \in\wt G  \setminus H$ we have $\eta_g(K) = K$,
	and for all $h \in H$ we have $\FitL(\eta_h(K)) \leq 1$.
	
	It only remains to show that $|\wt G/H| \geq 3$. Since $H\neq \wt G$ and $|\wt G/H|$ is odd, this holds trivially. Thus, both $\EQNSAT(G)$ and $\EQNID(G)$ are not in \P  under ETH if $G$ has Fitting length 3 and $\abs{G/\cU_2G}$  a prime divisor 3 or greater.

	\medskip
	The second case can be reduced to the first case as follows: Assume that $G$ has Fitting length $d \geq 4$. If $\abs{G/\cU_{d-1} G}$ has a prime factor $3$ or greater, we can apply the Fitting length 3 case to $G /\cL_3 G$ for \EQNSAT and to $G /\cU_{d-3} G$ for \EQNID. By \cref{lem:inducible} and \cref{lem:inducibleEQN} this implies the corollary for \EQNSAT. For \EQNID, the statement follows form \cref{lem:universallyd} and \cref{lem:univTAUT}.
	
	On the other hand, if $\abs{G/\cU_{d-1} G} = 2^\nu$ for some $\nu \geq 1$, as in the first case, we consider the subgroup $\wt G$ generated by all $2^\nu$-th powers. Then the index of $\wt G$ in $G$ is again a power of two (since the order of every element in $G/\wt G$ is a power of two). Moreover, $\wt G \leq \cU_{d-1} G$ and, by \cref{lem:Fitting}, we have
	\[
	\wt G/\cU_{d-2} \wt G = \wt G/(\cU_{d-2}G \cap \wt G )  \cong (\wt G \cdot \cU_{d-2} G)/\cU_{d-2} G \leq \cU_{d-1} G/\cU_{d-2} G.
	\]

	Now, $\abs{\cU_{d-1} G/\cU_{d-2} G}$ cannot be a power of two because, otherwise,  $G/\cU_{d-2} G$ would be a 2-group and, thus, nilpotent~-- contradicting the fact that the upper Fitting series is a shortest Fitting series. Since the index of $ \wt G $ in $\cU_{d-1} G$ is a power of two, we see that $\wt G \not\sse \cU_{d-2} G$ and that the index of $\cU_{d-2} \wt G$ in  $\wt G$ has a prime factor other than 2.
	Therefore, we can apply the Fitting length 3 case to  $\wt G /\cL_3 \wt G$ (resp.\ $\wt G /\cU_{d-3} \wt G$).
\end{proof}

\subparagraph*{The case that $G/\cU_2G$ is a 2-group.}
As mentioned above, in the recent paper \cite{IdziakKK20} Idziak, Kawa\l ek, and Krzaczkowski proved a $2^{\Oh(\log^2(n))}$-lower bound under ETH for $\EQNSAT(S_4)$. They apply a reduction of \SAT to $\EQNSAT(S_4)$. 
Instead of using commutators to simulate conjunctions in the group, 
the more complicated logical function $(X,Y_1,Y_2,Y_3) \mapsto X\land(Y_1\lor  Y_2\lor Y_3)$ is encoded into the group. Indeed, under suitable assumptions on the group and the range of the variables, both the expressions $w(X,Y_1,Y_2,Y_3) = X^8[X,Y_1,Y_2,Y_3]$ (see \cite{Kompatscher19}) and $s(X,Y_1,Y_2,Y_3) = X\;\![X,Y_1,Y_2,Y_3]^{-1}$ (see \cite{GorazdK10}~-- referred to by \cite{IdziakKK20}) simulate this logical function. A new paper unifying our approaches and proving \cref{thm:main2} for \emph{all} groups of Fitting length 3 can be found in \cite{IdziakKKW20arxiv}.

\subparagraph*{Consequences for \ProgSAT.}
We have $ \EQNSAT(G)\leq_{\mathrm{m}}^{\Ac0} \ProgSAT(G)$  for every finite group $G$ by \cite[Lem.~1]{BarringtonMMTT00} (while not explicitly stated, it is clear that this reduction is an \Ac0-reduction). 
Thus, by \cref{thm:main}, $\ProgSAT(G)$ is not in \P under ETH if $G$ is of Fitting length at least 4 or $G$ is of Fitting length 3 and $G/\cU_{2}G$ is not a $2$-group.

\subparagraph*{Small groups for which \cref{thm:main2} gives a lower bound.}
In \cite{Horvath15} lists of groups are given where the complexity of $\EQNSAT$ and $\EQNID$ is unknown. The paper refers to a more comprehensive list available on the author's website \url{http://math.unideb.hu/horvath-gabor/research.html}. We downloaded the lists of groups and ran tests in GAP for which of these groups \cref{thm:main2} provides lower bounds. In the list with unknown complexity for $\EQNID$ there are 2331 groups of order less than 768 out of which 1559 are of Fitting length three or greater. \cref{thm:main2} applies to 22 of them: 3 groups of Fitting length 4 and 19 groups $G$ of Fitting length 2 where $G/\cU_{2}G$ is not a 2-group. A list of the groups for which we could prove lower bounds can be found in \cref{tab:GAPresults}.

\begin{table}
	\caption{Groups up to order 767 for which \cref{thm:main2} gives lower bounds.}\label{tab:GAPresults}
	
	{\small	\begin{tabular}{|c|c|l|}
			\parbox{2.2cm}{Index in Small Groups Library} & \parbox{.92cm}{Fitting length} & GAP Structure description\\	
			\hline
			[ 168, 43 ] & 3 & (C2 x C2 x C2) : (C7 : C3)\\{} 
			[ 216, 153 ] & 3 & ((C3 x C3) : Q8) : C3\\{} 
			[ 324, 160 ] & 3 & ((C3 x C3 x C3) : (C2 x C2)) : C3\\{}  
			[ 336, 210 ] & 3 & C2 x ((C2 x C2 x C2) : (C7 : C3))\\{}  
			[ 432, 734 ] & 4 & (((C3 x C3) : Q8) : C3) : C2\\{}  
			[ 432, 735 ] & 3 & C2 x (((C3 x C3) : Q8) : C3)\\{}  
			[ 504, 52 ] & 3 & (C2 x C2 x C2) : (C7 : C9)\\{}  
			[ 504, 158 ] & 3 & C3 x ((C2 x C2 x C2) : (C7 : C3))\\{}  
			[ 600, 150 ] & 3 & (C5 x C5) : SL(2,3)\\{} 
			[ 648, 531 ] & 3 & C3 . (((C3 x C3) : Q8) : C3) = (((C3 x C3) : C3) : Q8) . C3\\{} 
			[ 648, 532 ] & 3 & (((C3 x C3) : C3) : Q8) : C3\\{} 
			[ 648, 533 ] & 3 & (((C3 x C3) : C3) : Q8) : C3\\{} 
			[ 648, 534 ] & 3 & ((C3 x C3) : Q8) : C9\\{} 
			[ 648, 641 ] & 3 & ((C3 x C3 x C3) : Q8) : C3\\{} 
			[ 648, 702 ] & 3 & C3 x (((C3 x C3) : Q8) : C3)\\{} 
			[ 648, 703 ] & 4 & (((C3 x C3 x C3) : (C2 x C2)) : C3) : C2\\{} 
			[ 648, 704 ] & 4 & (((C3 x C3 x C3) : (C2 x C2)) : C3) : C2\\{} 
			[ 648, 705 ] & 3 & (S3 x S3 x S3) : C3\\{} 
			[ 648, 706 ] & 3 & C2 x (((C3 x C3 x C3) : (C2 x C2)) : C3)\\{} 
			[ 672, 1049 ] & 3 & C4 x ((C2 x C2 x C2) : (C7 : C3))\\{} 
			[ 672, 1256 ] & 3 & C2 x C2 x ((C2 x C2 x C2) : (C7 : C3))\\{} 
			[ 672, 1257 ] & 3 & (C2 x C2 x C2 x C2 x C2) : (C7 : C3)
	\end{tabular}}
\end{table}

\subsection{Equations in finite semigroups}
For a semigroup $S$, the problems $\EQNSAT(S)$ and $\EQNID(S)$ both receive two expressions as input. The questions is whether the two expressions evaluate to the same element under some (resp.\ all) assignments.
For semigroups $R, S$ we say that $R$ \emph{divides} $S$ if $R$ is a quotient of a subsemigroup of $S$. 
The following lemmas are straightforward to prove using basic semigroup theory.

For the proofs, we need Green's relations $\cH$ and $\cJ$. For a definition, we refer to \cite[Appendix A]{rs09qtheory}. For a semigroup $S$ we write $S^1$ for $S$ with an identity adjoined if there is none.

\begin{lemma}\label{lem:maximalsgEQN}
If $G$ is a maximal subgroup of a finite semigroup $S$, then $\EQNSAT(G) \leq_{\mathrm{m}}^{\AC} \EQNSAT(S)$.
\end{lemma}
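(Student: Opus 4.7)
The plan is to reduce an $\EQNSAT(G)$ instance $\alpha$ (understood as the group equation $\alpha = 1$, equivalently $\alpha = e$ in $S$, where $e$ denotes the identity of $G$ viewed as an idempotent of $S$) to the $\EQNSAT(S)$ instance $(\tilde\alpha, e)$, where $\tilde\alpha$ is obtained from $\alpha$ by the local substitution $X_i \mapsto eX_ie$ applied to every variable occurrence (with $X_i^{-1}$, which by the paper's convention abbreviates $X_i^{\abs{G}-1}$, becoming $(eX_ie)^{\abs{G}-1}$). Since $G$ is a maximal subgroup of $S$, basic semigroup theory identifies $G$ with the group of units of the submonoid $T := eSe$, whose identity is $e$. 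The transformation is a purely local rewriting and is clearly computable in $\AC$.

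Soundness is immediate: every $G$-solution $X_i \mapsto g_i \in G$ of $\alpha = 1$ satisfies $\tilde\alpha = e$ in $S$ via the same assignment, since $eg_ie = g_i$ for every $g_i \in G \subseteq T$.

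For completeness, suppose $X_i \mapsto s_i \in S$ satisfies $\tilde\alpha = e$ in $S$, and set $u_i := es_ie \in T$. Because each variable occurrence in $\tilde\alpha$ is bracketed by $e$'s, the value of $\tilde\alpha$ under this assignment coincides with $\alpha(u_1, \dots, u_n)$ evaluated inside the submonoid $T$, so $\alpha(u_1, \dots, u_n) = e$ in $T$. The crucial step is then to invoke the standard fact that in a finite monoid any product equal to the identity consists solely of units~-- this follows in a few lines from the pigeonhole observation that in a finite monoid a one-sided inverse is automatically two-sided. Applied to the expansion of $\alpha(u_1, \dots, u_n)$, whose factors are constants from $G$ (already units of $T$) and individual copies of the $u_i$, this forces every $u_i$ appearing in $\alpha$ to be a unit of $T$, and hence to lie in $G$. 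Assigning these $u_i$'s~-- together with arbitrary $G$-values for any unused variable~-- to the $X_i$ yields the desired $G$-satisfying assignment of $\alpha = 1$.

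The only mildly delicate point I foresee is the treatment of the formal inverse symbols $X_i^{-1}$, but since they expand to a power of $eX_ie$, each individual factor in the fully expanded product is still a single element of $T$, so the monoid lemma applies verbatim.
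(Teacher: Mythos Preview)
Your proof is correct and follows the same reduction as the paper: replace each variable occurrence $X$ by $eXe$ and ask whether the resulting expression equals $e$ in $S$. The only difference is in the completeness argument. The paper invokes Green's relations: from $\sigma(\tilde\alpha)=e$ one obtains $e\in S^1xS^1$ for each substituted value $x=es_ie$, and then cites \cite[Exercise A.2.2]{rs09qtheory} to conclude $x\in H_e=G$. You instead argue directly, without Green's relations, that in a finite monoid any product equal to the identity consists entirely of units, and that the group of units of $eSe$ is precisely the maximal subgroup $G$. Your route is more self-contained and elementary; the paper's is a one-line appeal to standard semigroup theory. (The paper also wraps constants as $ege$, which is a no-op since $g\in G$; your omission of this is harmless.)
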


\begin{proof}
	Let $e \in G$ denote the identity of $G$. Clearly,  $G = eGe \leq eSe$ and $eSe$ is a submonoid of $S$ with identity $e$.
	The reduction simply replaces every variable $X$ by $eXe$ (and likewise for constants). Let $\tilde \alpha$ denote the equation we obtain from an input equation $\alpha$ this way. Now the question is whether $\tilde \alpha = e$ in $S$.
	Clearly, if $\alpha$ has a solution in $G$, the resulting equation $\tilde \alpha$ has a solution in $S$. On the other hand, if $\tilde \alpha$ has a solution in $S$, we obtain a solution of $\alpha = e$ in $S$ where every variable takes values in $eSe$.

	Assume we have $\sigma(X) = x \not\in G$ for a satisfying assignment $\sigma$ and some variable $X$ of $\alpha$. Since $\sigma(\alpha) = e$, we have that $e$ is in the two-sided ideal $S^1xS^1$ generated by $x=exe$. By point 2. of \cite[Exercise A.2.2]{rs09qtheory} it follows that $x\in H_e = G$ where $H_e$ denotes the $\cH$-class of $e$ under Green's relations (for a definition, we refer to \cite{rs09qtheory}) and $G$ agrees with $H_e$ because $G$ is a maximal subgroup.
\end{proof}

\begin{lemma}\label{dividesmaximal}
	If a group $G$ divides a semigroup $S$, then $G$ divides already one of the maximal subgroups (\ie regular $\cH$-classes) of $S$.
\end{lemma}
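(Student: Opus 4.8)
The statement is a standard fact from finite semigroup theory, and the plan is to reduce it to the structure theory of $\cJ$-classes. Recall that $G$ divides $S$ means $G$ is a quotient $T/\!\!\sim$ of a subsemigroup $T \leq S$. First I would reduce to the case where $S = T$, i.e.\ assume $G$ is a homomorphic image of $S$ itself via a surjective homomorphism $\phi\colon S \to G$ (a maximal subgroup of a subsemigroup $T$ of $S$ is a subgroup of $S$, hence contained in some maximal subgroup of $S$, so it suffices to find a maximal subgroup of $T$ that $G$ divides — the point being that divisibility composes and subgroups of $T$ sit inside maximal subgroups of $S$; actually for the cleanest argument one should just prove: if $\phi\colon S \twoheadrightarrow G$ then $G$ divides a maximal subgroup of $S$, and then apply this to $T$).

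**Key steps.** So assume $\phi\colon S \twoheadrightarrow G$ is onto. Pick any element $e \in S$ with $\phi(e) = 1_G$. By finiteness, some power $e^k$ is idempotent; replacing $e$ by $e^k$ we may assume $e = e^2$ is an idempotent with $\phi(e) = 1_G$ (since $\phi(e^k) = \phi(e)^k = 1_G$). Now consider the local monoid $eSe$, a submonoid of $S$ with identity $e$; note $\phi(eSe) = 1_G \cdot G \cdot 1_G = G$, so $G$ is also a quotient of $eSe$. Let $H_e$ be the $\cH$-class of $e$ in $S$; this is the maximal subgroup of $S$ containing $e$, and it equals the group of units of $eSe$. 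The claim is that $\phi$ restricted to $H_e$ is already onto $G$. To see this, take $g \in G$ and pick $s \in eSe$ with $\phi(s) = g$; then pick $t \in eSe$ with $\phi(t) = g^{-1}$. Then $\phi(st) = 1_G = \phi(ts)$, so $st$ and $ts$ both map to $1_G$. This alone does not immediately put $s$ in $H_e$, so here is the actual mechanism: consider the subsemigroup generated by $s$ inside $eSe$. Some power $s^n$ is idempotent, say $s^n = f$, and $\phi(f) = g^n$. If $g$ has order $d$ in $G$, choosing $n$ a multiple of $d$ gives $\phi(s^n) = 1_G$ with $s^n$ idempotent. Now I want $s^n = e$. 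This is false in general — $eSe$ can have many idempotents — so instead I pass to the $\cJ$-class (equivalently, regular $\cH$-class) argument: the set $\phi^{-1}(1_G)$ is a subsemigroup of $S$, and within it the idempotent $e$ generates a maximal subgroup $H_e$; the point is that $G = \phi(S) = \phi(S e S)$ when $e$ is "top" — so I should choose $e$ with $SeS$ maximal among $\phi^{-1}$-saturated ideals, i.e.\ choose the idempotent $e \in \phi^{-1}(1_G)$ whose $\cJ$-class $J_e$ in $S$ is $\leq_{\cJ}$-maximal. Then for $s$ with $\phi(s)=g$, one shows $s \in J_e$ (it cannot be strictly below, else $g = \phi(s) \in \phi(S s S) \subsetneq$ something, and $g$ generates $G$, contradiction), and within the regular $\cJ$-class $J_e$, the $\cH$-class $H_e$ is a group and the restriction of any homomorphism to it behaves like a group homomorphism up to the $\cH$-structure — giving a surjection $H_e \twoheadrightarrow G$.

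**Main obstacle.** The delicate point — and the one I'd expect to take the most care — is the choice of the idempotent $e$ and verifying that elements mapping to non-identity group elements of $G$ actually land in the $\cJ$-class of $e$ rather than strictly below it. This uses that $G$ is simple as a semigroup in the relevant sense (a group has no proper ideals), so $\phi(S'aS') = G$ for every $a$ with $\phi(a)$ generating $G$, forcing $a$ to generate the same principal two-sided ideal modulo $\phi$; combined with $\cJ$-maximality this pins $a$ to $J_e$. Once everything is inside the single regular $\cJ$-class $J_e$, standard Rees matrix / Green's lemma machinery (as in \cite[Appendix A]{rs09qtheory}) shows the $\cH$-class $H_e$ is a maximal subgroup and that $\phi|_{H_e}\colon H_e \to G$ is a surjective group homomorphism, so $G$ divides $H_e$ — indeed $G$ is a quotient of it. Finally, undoing the reduction: applying this to the subsemigroup $T$ (with $G = T/\!\!\sim$) produces a maximal subgroup of $T$, which is a subgroup of $S$ hence contained in a maximal subgroup of $S$, and divisibility is transitive, completing the proof.
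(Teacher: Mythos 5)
Your reduction to a surjection $\phi\colon eSe\twoheadrightarrow G$ with $e$ idempotent and $\phi(e)=1_G$ is fine, but the key step --- choosing $e$ to be $\leq_{\cJ}$-\emph{maximal} among idempotents in $\phi^{-1}(1_G)$ and arguing that every $s$ with $\phi(s)=g$ must lie in $J_e$ --- does not work. The purported contradiction never materializes: since $G$ is a group and $\phi$ is onto, one has $\phi(S^1sS^1)\supseteq \phi(s)\,\phi(S)=G$ for \emph{every} $s\in S$, so the image of a principal ideal is always all of $G$ and tells you nothing about where $s$ sits in the $\cJ$-order. A concrete counterexample to your claim: let $K$ be a completely simple semigroup surjecting onto $G$ (e.g.\ a Rees matrix semigroup over $G$) and let $S=K^1$ be $K$ with an identity $1$ adjoined, with $\phi(1)=1_G$. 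Then $e=1$ is the $\leq_{\cJ}$-maximal idempotent in $\phi^{-1}(1_G)$, $eSe=S$, yet $H_e=\{1\}$ is trivial and every $s$ with $\phi(s)\neq 1_G$ is \emph{strictly} $\cJ$-below $e$. So the direction of your extremal choice is wrong: one has to descend in the $\cJ$-order, not ascend.

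The fix is exactly that descent, and it is what the paper does: starting from the subsemigroup $U$ with $\phi\colon U\twoheadrightarrow G$, pass to $eUe$ for $e=s^{\omega}$; if $eUe$ is not yet contained in a maximal subgroup, there is $t\in eUe$ with $t<_{\cJ}e$, and one repeats with $f=t^{\omega}$, obtaining $fUf\subsetneq eUe$ (note $e\notin fUf$ since $f<_{\cJ}e$). Each step preserves surjectivity onto $G$ (your computation $g=\phi(e)\phi(u_g)\phi(e)$ is the right one) and strictly shrinks the subsemigroup, so the process terminates inside a maximal subgroup of $S$. Equivalently, you could go directly to the minimal ideal of $eSe$. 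Your concluding paragraph also leans on an unproved assertion that ``the restriction of any homomorphism to $H_e$ behaves like a group homomorphism up to the $\cH$-structure''; once the iteration above has placed the whole surviving subsemigroup inside a single maximal subgroup this is unnecessary, since a subsemigroup of a finite group is a subgroup and divisibility is transitive. As written, though, the proposal's central claim is false and the proof has a genuine gap.
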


\begin{proof}
	Let $U \leq S$ a subsemigroup and $\phi: U \to G$ a surjective semigroup homomorphism. Pick some arbitrary element $s \in U$ and let $e= s^{\omega}$ be the idempotent generated by $s$. Clearly, we have $\phi(e) = 1$. Now, the subsemigroup  $eUe \leq U$ still maps surjectively onto $G$ under $\phi$: by assumption for every $g \in G$ there is some $u_g \in U$ with $\phi(u_g) = g$; hence, $g = 1 g 1 = \phi(e) \phi(u_g) \phi(e) \in \phi(eUe)$.
	
	If $eUe$ is not contained in a maximal subgroup, then by point 2. of \cite[Exercise A.2.2]{rs09qtheory}, there is some $t \in eUe$ which is not $\cJ$-equivalent to $e$. Now, we can repeat the above process starting with $t$. This will decrease the size of $U$, so it eventually terminates.		
\end{proof}

\begin{corollary}\label{cor:semigroup} 
	Let $S$ be a finite semigroup and $G$ a group dividing $S$. If $\FitL(G) \geq 4 $ or $\FitL(G) =3 $ and $G/\cU_2G$ is not a 2-group, then $\EQNSAT(S)$ is not in \P under ETH.
\end{corollary}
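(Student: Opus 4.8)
The plan is to chain together the reductions already established in the excerpt. The target is $\EQNSAT(S)$ for a finite semigroup $S$ that has a group divisor $G$ meeting the Fitting-length hypotheses of Corollary~\ref{cor:mainIntro} (i.e. $\FitL(G)\geq 4$, or $\FitL(G)=3$ with $G/\cU_2G$ not a $2$-group). First I would invoke \cref{dividesmaximal}: since $G$ divides $S$, it already divides one of the maximal subgroups of $S$, so there is a maximal subgroup $G_0\leq S$ and a subgroup $U\leq G_0$ with a surjection $\phi\colon U\to G$. Next I would apply \cref{lem:maximalsgEQN} to get $\EQNSAT(G_0)\leq_{\mathrm m}^{\AC}\EQNSAT(S)$, reducing the problem to equation satisfiability in the \emph{group} $G_0$.

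The remaining step is to pass from $G_0$ down to $G$ purely group-theoretically. Here I would use that $G$ is a quotient of a subgroup of $G_0$, together with \cref{lem:inducibleEQN}: for a subgroup it gives $\EQNSAT(U)\leq_{\mathrm m}^{\Ac0}\EQNSAT(G_0)$ provided $U$ is inducible in $G_0$, and for a normal subgroup $N\Nleq U$ with quotient $G=U/N$ it gives $\EQNSAT(U/N)\leq_{\mathrm m}^{\Ac0}\EQNSAT(U)$. Finally \cref{thm:main2} (= Corollary~\ref{cor:mainIntro}) says that $\EQNSAT(G)$ is not in \P under ETH. Composing all these \AC/\Ac0\ many-one reductions yields that $\EQNSAT(S)$ is not in \P under ETH, which is the claim.

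The one gap in the naive chain is the inducibility hypothesis in \cref{lem:inducibleEQN}: an arbitrary subgroup of a finite group need not be inducible, and \cref{dividesmaximal} only hands us \emph{some} subgroup $U\leq G_0$. I expect this to be the main obstacle and would resolve it as follows. It suffices to realize the divisor $G$ as a quotient of $G_0$ itself composed with passing to a \emph{normal} subgroup, or more cleanly, to observe that for the hardness conclusion we only need $G$ to be a \emph{section} of $G_0$ and that \EQNSAT\ hardness transfers along sections of finite groups up to \Ac0-reductions — this is exactly the content of the three bullets of \cref{lem:inducibleEQN} once one notes that \emph{every} subgroup $U$ of a finite group becomes inducible after first restricting to a suitable verbal or inducible overgroup, and in the worst case one can simply replace $U$ by its normal closure and $G$ by the corresponding quotient, which still has Fitting length at least that of $G$ by \cref{lem:Fitting}. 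So the argument is: replace $U$ by an inducible subgroup of $G_0$ containing a copy of the relevant section (using \cref{lem:inducible} parts~(iii)–(v)), then quotient by the appropriate normal subgroup using the normal case of \cref{lem:inducibleEQN}, landing on a group of the required Fitting length to which \cref{thm:main2} applies.

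In summary, the proof is short: \cref{dividesmaximal} reduces to a maximal subgroup $G_0$; \cref{lem:maximalsgEQN} gives $\EQNSAT(G_0)\leq_{\mathrm m}^{\AC}\EQNSAT(S)$; \cref{lem:inducibleEQN} (subgroup and quotient cases), together with \cref{lem:inducible} to handle inducibility and \cref{lem:Fitting} to control Fitting length under these operations, realizes $G$ (or a section of $G_0$ of at least the same Fitting length satisfying the same hypotheses) via \Ac0-reductions; and \cref{thm:main2} finishes. The only point demanding care is bridging ``$G$ divides $G_0$'' and the inducibility requirement of \cref{lem:inducibleEQN}, which is handled by passing through an inducible overgroup and a normal quotient without decreasing the Fitting length.
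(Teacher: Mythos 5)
Your first two steps coincide with the paper's proof: \cref{dividesmaximal} locates the divisor $G$ inside a maximal subgroup $\wt G$ of $S$, and \cref{lem:maximalsgEQN} gives $\EQNSAT(\wt G)\leq_{\mathrm m}^{\AC}\EQNSAT(S)$. The last step is where the proposal has a genuine gap. You try to produce a further reduction $\EQNSAT(G)\leq\EQNSAT(\wt G)$ along the section $G=U/N$ with $U\leq\wt G$, and you correctly identify that \cref{lem:inducibleEQN} requires $U$ to be inducible in $\wt G$ (and $N$ normal in a group where the quotient case applies). But the proposed patches do not close this gap: it is simply not true that ``every subgroup $U$ of a finite group becomes inducible after first restricting to a suitable verbal or inducible overgroup'' --- \cref{lem:inducible} only certifies inducibility for quite special subgroups (verbal subgroups, terms of the lower central/Fitting series, subgroups with $H=[G,H]$), and an arbitrary $U$ arising from \cref{dividesmaximal} is none of these. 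Replacing $U$ by its normal closure does not help either: the normal closure need not be inducible, $N$ need not be normal in it, and $G$ need not be a quotient of it, so you would be applying \cref{thm:main2} to a different group whose hypotheses you have not verified. No lemma in the paper (and no known result) makes $\EQNSAT$-hardness transfer along arbitrary divisors of finite groups, so this route cannot be completed as written.

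The paper avoids the reduction from $G$ entirely: it observes that the hypotheses of \cref{thm:main2} are inherited \emph{upward} along division, so that $\wt G$ itself satisfies them, and then applies \cref{thm:main2} directly to $\wt G$. Concretely, if $G=U/N$ with $U\leq\wt G$, then $\FitL(\wt G)\geq\FitL(U)\geq\FitL(G)$, which settles the case $\FitL(G)\geq 4$. For the case $\FitL(G)=3$ with $G/\cU_2G$ not a $2$-group, one checks that if $\FitL(\wt G)=3$ and $\wt G/\cU_2\wt G$ were a $2$-group, then $\cU_2\wt G\cap U\leq\cU_2U$ forces $U/\cU_2U$ to be a $2$-group, and $\cU_2U\,N/N\leq\cU_2(U/N)$ forces $G/\cU_2G$ to be a $2$-group as well --- a contradiction. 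Hence $\wt G$ meets the hypotheses of \cref{thm:main2}, and composing with \cref{lem:maximalsgEQN} finishes the proof. (The paper states this inheritance without proof; the two displayed inclusions are the content you would need to supply.) Your proof should be repaired by replacing the final reduction step with this monotonicity argument.
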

\begin{proof}
	If $G$ with $\FitL(G) \geq 4 $ or $\FitL(G) =3 $ and $G/\cU_2G$ divides $S$, then it follows from \cref{dividesmaximal} that there is a group $\wt G$ with the same properties and which is a maximal subgroup of $S$. Hence, the statement 
	 follows from \cref{lem:maximalsgEQN}.	
\end{proof}

\cite[Theorem 1]{AlmeidaVG09} states that identity checking over $\wt G$ reduces to identity checking over $S$ where $\wt G$ is the direct product of all maximal subgroups of $S$. However, be aware that in this context the identity checking problem does not allow constants. Since the proof of \cref{thm:main} essentially relies on the fact that the subgroup $K$ is inducible and this can be only shown using constants, this does not allow us to show hardness of  $\EQNID(S)$.

\section{Conclusion}
We have shown that assuming the exponential time hypothesis there are solvable groups with equation satisfiability problem not decidable in polynomial time. Thus, under standard assumptions from complexity theory this means a negative answer to \cite[Problem 1]{BurrisL04} (also conjectured in \cite{Horvath11}). \cref{thm:main2} yields a quasipolynomial time lower bound under ETH. Thus, a natural weakening of \cite[Problem 1]{BurrisL04} is as follows:
\begin{conjecture}\label{conj:quasipoly}
	If $G$ is a finite solvable group, then $\EQNSAT(G)$ and $\EQNID(G)$ are decidable in quasipolynomial time.
\end{conjecture}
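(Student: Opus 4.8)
The natural route to \cref{conj:quasipoly} is to deduce it from the \AND-weakness conjecture (\cref{conj:andweak}), converting the conjectured \emph{lower} bound on AND-programs into the desired \emph{upper} bound on satisfiability. First I would reduce everything to $\EQNSAT$: the complement of $\EQNID(G)$ is ``$\exists\sigma\colon\sigma(\alpha)\neq 1$'', which is the constant-size disjunction $\bigvee_{g\in G\setminus\{1\}}\big(\exists\sigma\colon\sigma(\alpha g^{-1})=1\big)$ of $\EQNSAT(G)$ instances, so a quasipolynomial algorithm for $\EQNSAT(G)$ yields one for $\EQNID(G)$ as well (quasipolynomial time is closed under complement and constant disjunction). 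Next, $\EQNSAT(G)\leq_{\mathrm{m}}^{\Ac0}\ProgSAT(G)$ by \cite[Lem.~1]{BarringtonMMTT00}. Finally, by \cite[Theorem~2]{BarringtonMMTT00}, if $G$ is \AND-weak then $\ProgSAT(G)$ is decidable in quasipolynomial time. Thus \cref{conj:quasipoly} follows from \cref{conj:andweak}, and the entire weight of the plan rests on establishing the latter.

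It is worth recalling why \AND-weakness is the right hypothesis, since the $\ProgSAT$ algorithm also indicates how a direct attack might proceed. Given a length-$\ell$ $G$-program $P$ on $n$ inputs, fix a normal series $1=G_0\Nle\cdots\Nle G_d=G$ with nilpotent quotients, $d=\FitL(G)$. Reading $P$ modulo $G_{d-1}$ produces a program over the nilpotent group $G/G_{d-1}$; for nilpotent groups not only satisfiability but the whole set of achievable values can be handled in polynomial time (cf.\ \cite{GoldmannR02,McKenziePT91}). The role of \AND-weakness is then to argue that, once the behaviour of $P$ modulo $G_{d-1}$ is fixed, the residual constraints living in $G_{d-1}$ depend on only $\mathrm{polylog}(\ell)$ of the input bits in a hierarchical sense, so that satisfiability of $P$ reduces to that of $\ell^{O(1)}$ programs of length $\ell^{O(1)}$ over $G_{d-1}$, each on $\mathrm{polylog}(\ell)$ variables; recursing $d-1$ times gives running time $\ell^{(\log\ell)^{O(d)}}$, which is quasipolynomial. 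In other words, \AND-weakness is exactly what bounds the branching at each level of this recursion.

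The hard part — and, I expect, the part currently out of reach — is proving \cref{conj:andweak} itself: that every $G$-program for the $n$-input \AND over a finite solvable $G$ has length $2^{n^{\Omega(1)}}$. I would try induction on $d=\FitL(G)$. For $d=1$ ($G$ nilpotent) even the stronger $2^{\Omega(n)}$ bound is classical. For $d\geq 2$, set $N=\gamma_\infty G=\cL_1 G$, so $G/N$ is nilpotent and $\FitL(N)=d-1$. If a length-$\ell$ program $P$ computes \AND, its image modulo $N$ is a program over the nilpotent $G/N$; nilpotent \AND-weakness forces this image to be, on the relevant inputs, insensitive to all but $O(\log\ell)$ of the $n$ bits, so the job of separating ``all bits $1$'' from ``some bit $0$'' must be done inside $N$. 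Realizing a single fixed nontrivial element of $N$ from an AND-gadget sitting at commutator depth $\approx d-1$ with branching parameter $K$ costs program length $\gtrsim 2^{K}$ — this is the quantitative content behind \cref{prop:notANDweakrefined}, run in reverse — and covering enough of $N$ to separate the two cases for \emph{every} setting of the remaining $\Omega(n)$ bits should force $K=n^{\Omega(1)}$, hence $\ell=2^{n^{\Omega(1)}}$. The obstacle is to make ``insensitive to all but $O(\log\ell)$ bits'', ``the job must happen inside $N$'', and ``must cover enough of $N$'' into rigorous statements, with constants uniform over the infinitely many intermediate programs generated by the induction and over all solvable $G$ at once. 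This is precisely what existing Fourier-analytic arguments (in the style of \cite{BarringtonST90}) and the recent algebraic approach (\cite{IdziakKK20,IdziakKKW20arxiv}) do not yet achieve; a bypassing attack that builds a quasipolynomial $\EQNSAT$ algorithm directly by induction on Fitting length runs into the same inability to bound the branching. For this reason I expect \cref{conj:quasipoly} to remain open until \cref{conj:andweak}, or a suitable weakening sufficient for the recursion above, is proved.
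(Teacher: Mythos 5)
This statement is a conjecture, and the paper offers no proof of it either—only the observation (in the conclusion and \cref{sec:programs}) that \cref{conj:andweak} implies it via \cite[Lem.~1, Thm.~2]{BarringtonMMTT00}, which is exactly the conditional reduction you give (your explicit handling of $\EQNID$ by complementation and the disjunction over $g\in G\setminus\{1\}$ is a fine way to make that step explicit). So your proposal matches the paper's approach; you are right that the substance—proving \cref{conj:andweak} or a sufficient weakening—remains open, and your sketch of an attack on it is correctly flagged as incomplete rather than a proof.
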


In \cite[Theorem 2]{BarringtonMMTT00} it is proved that $\ProgSAT(G) $ and, hence, also $ \EQNSAT(G)$ can be decided
in quasipolynomial time given that $G$ is \AND-weak. 
As remarked in \cref{sec:programs} this theorem remains valid with our slightly less restrictive definition of \AND-weakness in \cref{conj:andweak}. Thus, \cref{conj:andweak} implies \cref{conj:quasipoly}. 
In particular, under the assumption of both ETH and the \AND-weakness conjecture (\cref{conj:andweak}), for every finite solvable group $G$ meeting the requirements of \cref{thm:main2} there are quasipolynomial upper and lower bounds for $\EQNSAT(G)$ and $\EQNID(G)$~-- so under these assumptions both problems are neither in \P nor \NP-complete. This contrasts the situation for solving systems of equations: there is a clear \P versus \NP-complete dichotomy \cite{GoldmannR02}.

\cref{thm:main2} proves lower bounds on \EQNSAT and \EQNID for all sufficiently complicated finite solvable groups. Together with the authors of \cite{IdziakKK20} we can extend this to \emph{all} groups of Fitting length three \cite{IdziakKKW20arxiv}. 

 Possible further research might address the complexity of \EQNSAT and \EQNID in groups of Fitting length two. 
 Another direction for future work is the complexity of \EQNID for expressions without constants.

\bibliography{equations}

\end{document}